\numberwithin{equation}{section}
\theoremstyle{plain}
\newtheorem{thm}{Theorem}[section]
\newtheorem{lem}[thm]{Lemma}
\newtheorem{cor}[thm]{Corollary}
\newtheorem{prop}[thm]{Proposition}
\theoremstyle{definition}
\newtheorem{defn}[thm]{Definition}
\newcommand{\R}{{\mathbb R}}
\newcommand{\di}{\mathrm{div}}
\newcommand{\Ric}{\mathrm{Ric}}
\newcommand{\mr}{\mathring}
\newcommand{\Lich}{\mathrm{Lich}}
\newcommand{\supp}{\mathrm{supp}}
\newcounter{mnotecount}[section]
\let\oldmarginpar\marginpar
\renewcommand\marginpar[1]{\-\oldmarginpar[\raggedleft\footnotesize #1]%
{\raggedright\footnotesize #1}}
\title[A Limit Equation for AC Initial Data]{A Limit Equation Criterion for Applying  the Conformal Method to Asymptotically Cylindrical Initial Data Sets}
\author[J. Dilts]{James Dilts$^1$}
\address{$^1$University of Oregon}
\email{jdilts@uoregon.edu}
\author[J. Leach]{Jeremy Leach$^2$}
\address{$^2$Stanford University}
\email{jleach@math.stanford.edu}
\keywords{Einstein constraint equations, conformal method, asymptotically cylindrical manifolds}
\date{\today}
\begin{document}
\maketitle
\frenchspacing
\begin{abstract} We prove that in a certain class of conformal data on an asymptotically cylindrical manifold, if the conformally decomposed Einstein constraint equations do not admit a solution, then one can always find a nontrivial solution to the limit equation first explored by Dahl, Gicquaud, and Humbert in \cite{DGH11}. We also give an example of a Ricci curvature condition on the manifold which precludes the existence of a solution to this limit equation, showing that such a limit criterion can be a useful tool for studying the Einstein constraint equations on manifolds with asymptotically cylindrical ends.
\end{abstract}

\section{Introduction}

It has been known for over sixty years that Einstein's equations \textit{in vacuo} can be decomposed into a well-posed initial value problem in which a Riemannian $n$-manifold $(M,\overline{g})$ along with a symmetric 2-tensor $\overline{K}$ can be isometrically embedded as a spacelike slice of a spacetime with extrinsic curvature tensor $\overline{K}$ if and only if the Einstein constraint equations
\begin{equation}\label{constrainteq1}
R_{\overline{g}} - |\overline{K}|^2 _{\overline{g}} + (\mathrm{tr}_{\overline{g}} \overline{K})^2 = 0
\end{equation} \begin{equation}\label{constrainteq2}
\di_{\overline{g}} \overline{K} - \nabla \mathrm{tr}_{\overline{g}} \overline{K} = 0
\end{equation}

\noindent are satisfied.

The study of these equations has been an active field of research in recent years, and several approaches have been taken to solve (\ref{constrainteq1})-(\ref{constrainteq2}). By far the most common is the conformal method, in which one specifies a background metric $g$ and looks for a solution of the constraints in the conformal class of $g$. In particular, we seek a scalar function $\phi$ and a vector field $W$ such that $(\overline{g},\overline{K})$ solves (\ref{constrainteq1})-(\ref{constrainteq2}) where this pair is of the form
\begin{equation}\label{conformalg}
\overline{g}_{ij} = \phi^{N-2} g_{ij}
\end{equation} \begin{equation}\label{conformalK}
\overline{K}_{ij} = \frac{\tau}{n} \overline{g}_{ij} + \phi^{-2} (\sigma_{ij} + (LW)_{ij}).
\end{equation}

\noindent Here $\tau = \mathrm{tr}_{\overline{g}} \overline{K}$, $\sigma$ is a specified transverse traceless tensor, $L$ is the conformal Killing operator which we define below, and $N$ is a dimensional constant we will frequently use throughout this paper, given by
\begin{equation*}
N = \frac{2n}{n-2}.
\end{equation*}

It is well known (see, for example, \cite{G12}) that the constraint equations are thus reduced to solving a particular semilinear elliptic system in $\phi$ and $W$, which is often called the LCBY equations in honor of Lichnerowicz, Choquet-Bruhat, and York who first studied them:

\begin{equation}\label{origLich}
  \Delta_g \phi - c_n R_g \phi = b_n \tau^2 \phi^{N-1} - c_n |\sigma + LW|^2 _g\phi^{-N-1}
\end{equation}\begin{equation}\label{origVect}
  \di L W = \frac{n-1}{n} \phi^{N} d\tau.
\end{equation}

\noindent Equation \eqref{origLich} is known as the \textit{Lichnerowicz equation}, and in it we have introduced the dimensional constants
\begin{equation*}
c_n = \frac{n-2}{4(n-1)}, \quad b_n = \frac{n-2}{4n}.
\end{equation*}

\noindent .

Indeed much of the efforts aimed at finding solutions to the constraint equations have been focused on determining the solvability of the system (\ref{origLich})-(\ref{origVect}). One of the earliest systematic studies of these equations was the 1995 paper of J. Isenberg \cite{Isenberg95} which considered the case of constant mean curvature (CMC) $\tau$ on closed manifolds, in which case the LCBY system decouples since the vector equation becomes trivial. In the years since, many authors have sought to generalize Isenberg's results on closed manifolds with near-CMC conditions (\cite{IM96},\cite{ACI08}), and more recently these near-CMC conditions have been relaxed by Holst, Nagy, and Tsogtgerel in \cite{HNT09} in the non-vacuum case, and by Maxwell in \cite{Maxwell09}. Their results used a Schauder fixed point argument to find a solution to the LCBY equations under the condition that the tensor $\sigma$ have sufficiently small norm.

These methods for finding solutions have been extended to other geometries including asymptotically Euclidean (\cite{DIMM13}), asymptotically hyperbolic (\cite{GS12}) and compact with boundary (\cite{Dilts13}, \cite{HMT13}). General results for the solvability of the LCBY equations on manifolds with asymptotically cylindrical or periodic geometries (which we collectively call \textit{geometries of cylindrical type}) appeared in 2012 with the pair of papers \cite{CM12} and \cite{CMP12} which analyzed the Lichnerowicz equation and vector equation separately. It was noted in the first of these that several known examples of black hole spacetimes admit CMC hypersurfaces with ends of cylindrical type, including the extreme Kerr solution. Solutions to the fully coupled LCBY equations were recently constructed on such manifolds in the paper \cite{Leach14}, using an adaptation of the Schauder fixed point technique.

Another approach to finding solutions of the LCBY equations on closed manifolds is the relatively recent result of Dahl, Gicquaud, and Humbert \cite{DGH11}. In that paper, the authors showed that in the absence of conformal Killing fields, so long as the mean curvature $\tau$ is non-vanishing and $\sigma \not\equiv 0$ for the non-negative Yamabe classes, then the system (\ref{origLich})-(\ref{origVect}) fails to have a solution only if there is a nontrivial solution $W$ to
\begin{equation}\label{limiteq}
\di LW = \alpha_0 \sqrt{\frac{n-1}{n}} |LW| \frac{d\tau}{\tau}
\end{equation}

\noindent for some $\alpha_0 \in (0,1]$. This is known as the \textit{limit equation}, for reasons that become clear in the proof of the theorem. We will give a summary of their proof in the next section.

This result has thus far proven to be quite adaptable to other (i.e. noncompact) geometries. Indeed, a limit equation result has very recently been shown to hold in the asymptotically Euclidean setting in \cite{DGI14} and in the asymptotically hyperbolic setting in \cite{GS12}, but difficulties arose in proving it in the compact with boundary case, c.f. \cite{Dilts13}. It is thus natural to wonder in which geometries such a result holds. In this paper, we prove that an analogous limit equation result holds on a manifold with asymptotically cylindrical ends, thus shedding more light on solvability of the constraint equations in such a geometry. We will also give an example of (conformal) initial data which does \textit{not} admit any solutions to (\ref{limiteq}), and thus necessarily admits a solution to the LCBY system (\ref{origLich})-(\ref{origVect}). We will do this by imposing a particular reasonable bound on the Ricci curvature.


\subsection{Notation and definitions} \label{notation}

We will assume that our manifold $(M,g)$ is asymptotically cylindrical. This means that there is some compact set $\mathcal{K} \subset M$ whose complement in $M$ decomposes as
\begin{equation*}\label{cyldecomp}
M\setminus \mathcal{K} = \bigsqcup^m _{\ell = 1} E_\ell.
\end{equation*}

\noindent Here, for each $\ell$, there is some closed $(n-1)$-manifold $N_\ell$ such that $E_\ell$ is diffeomorphic to the half-cylinder $\R_+ \times N_\ell$. Moreover, on each end the metric $g$ decays exponentially to the exactly cylindrical metric $dt^2 + \mr{g}_\ell$, where $\mr{g}_\ell$ is some metric on $N_\ell$ and $t$ is some $C^3$ coordinate function on the ends which is `radial' in the sense that there are positive constants $C_1 \leq C_2 \leq C_3$ such that
\begin{equation*}\label{radialt}
C_1 + t \leq C_2 + \mathrm{dist}_g(\cdot,\partial E_\ell) \leq C_3 + t.
\end{equation*}

\noindent We impose a similar decay condition on its derivatives, which can be made precise by requiring that on each end
\begin{equation*}\label{metricdecay}
|\mr\nabla^k [g - (dt^2 + \mr{g}_\ell)]| = \mathcal{O}(e^{-\omega t})
\end{equation*}

\noindent for some positive $\omega$ for all applicable $k$. Here $\mr\nabla$ is the covariant derivative associated to the exactly cylindrical metric. We will call a metric $\check{g}$ \textit{conformally asymptotically cylindrical} if we may write it in the form $\check{g} = w^{N-2} g$ where $g$ is an asymptotically cylindrical metric and $w$ is a positive function such that, on each end $E_\ell$, $w \rightarrow \mr{w}_\ell$ at the rate $\mathcal{O}(e^{-\omega t})$ along with its derivatives, where $\mr{w}_\ell$ is a smooth positive function on $N_\ell$. (``Smooth" in this paper will mean ``as smooth as the metric.") Note that this condition implies that the metric $\check{g}$ does not decay on the ends, and so metrics with conic or cusp singularities do not belong to the class of conformally asymptotically cylindrical metrics according to our definition.

We would like the class of conformal initial data $(g,\sigma,\tau)$ we consider to have certain nice asymptotic properties like the metric. We will thus restrict to \textit{tame} initial data, defined as follows.
\begin{defn}
Conformal initial data $(g,\sigma,\tau)$ is said to be tame if $g$ is (conformally) asymptotically cylindrical and both of the following are satisfied:
\begin{itemize}

\item $\tau^2 \geq \tau_0 ^2 > 0$ and $\tau^2 \rightarrow \mr{\tau}_\ell ^2$ on the end $N_\ell$ at the rate $\mathcal{O}(e^{-\omega t})$, where $\mr\tau_\ell ^2$ and $\tau_0 ^2$ are positive constants.

\item $|\sigma|^2 _g \rightarrow \mr{\sigma}^2$ at the rate $\mathcal{O}(e^{-\omega t})$ where $\mr{\sigma}^2 \not\equiv 0$ is some smooth nonnegative function on $N_\ell$, or any smooth function on $N_\ell$ if $R<0$ on the ends.

\end{itemize}
\end{defn}

\noindent Our main results, stated below, make the stronger assumption that each $\mr\tau_\ell^2$ is a constant.

Now that we have defined our admissible initial data, we need to define the function spaces in which we will perform our analysis. These will be defined as in \cite{Leach14}. For any $1 \leq p < \infty$, define the weighted $L^p$-Sobolev space $W^{k,p} _\delta$ to be the space of all functions which are finite with respect to the norm
\begin{equation}\label{sobolevnorm}
\|X\|_{W^{k,p} _\delta} = \bigg ( \sum_{j \leq k} \int_M |\nabla^j X|^p e^{-p \delta t} dV_g \bigg )^{1/p}.
\end{equation}

\noindent As in \cite{CMP12}, we will denote the function space $W^{k,2} _\delta$ by $H^k _\delta$. Next, assume without loss of generality that the radius of injectivity for $(M,g)$ is greater than 1. Then if $B_1(q)$ is the ball of radius 1 about the point $q \in M$, we define local and global weighted H\"older norms by
\begin{equation*}\label{localholder}
\|X\|_{k,\mu; B_1 (q)} = \sum^k_{i=0} \sup_{B_1 (q)} |\nabla^i X| + \sup_{x,y \in B_1 (q)} \dfrac{|\nabla^k X (x) - \nabla^k X(y)|}{d_g (x,y)^\mu}
\end{equation*} \begin{equation*}\label{globalholder}
\|X\|_{k,\mu} = \sup_{q \in M} \|X\|_{k,\mu; B_1 (q)},
\end{equation*}

\noindent and also the weighted H\"{o}lder norms by
\begin{equation*}\label{weightedholder}
\|X\|_{k,\mu,\delta} = \|e^{-\delta t} X\|_{k,\mu}.
\end{equation*}

\noindent The space of all functions (or tensor fields) for which the global H\"older norm is finite will be denoted by $C^{k,\mu}$, and the space of functions (or tensor fields) which are finite with respect to the weighted H\"older norm will be denoted by $C^{k,\mu} _\delta$. We similarly denote the standard sup-norm by $\|\cdot\|_\infty$ (or by $\|\cdot\|_0$ if restricted to $C^0$) and the corresponding weighted sup-norm by $\|\cdot\|_{\infty,\delta}$ (or $\|\cdot\|_{0,\delta})$.


\subsection{Preliminary results}

The conformal Killing operator $L$ which appears in the LCBY equations is given by
\begin{equation*}\label{confkillop}
(LX)_{ij} = \nabla_i X_j + \nabla_j X_i - \frac{2}{n} \di X g_{ij}.
\end{equation*}

\noindent This is merely the trace-free part of the deformation tensor associated to $X$. Any vector field in the nullspace of $L$ is called a \textit{conformal Killing field}, and the operator $\di L$ is known as the \textit{conformal vector Laplacian}. The mapping properties of this operator acting between weighted Sobolev spaces were studied extensively in \cite{CMP12}. The most important mapping property for us will be \cite[Thm 6.1]{CMP12}. Before we state the theorem, we first define $\mathscr{Y}_\ell$ to be the set of all globally bounded conformal Killing fields with respect to the \textit{exactly} cylindrical metric on $E_\ell$. If we then define a smooth cutoff function $\chi_\ell$ which vanishes outside of $E_\ell$ and is identically 1 where $t \geq 1$ on $E_\ell$, we set $\mathscr{Y} = \oplus \{ \chi_\ell Y : Y \in \mathscr{Y}_\ell \}$. The theorem then gives us the mapping properties of $\di L$ acting as an operator between weighted Sobolev spaces:

\begin{thm} \label{conformalVectorLaplacian}
Let $(M,g)$ be a Riemannian $n$-manifold with a finite number of ends which are asymptotically cylindrical. Suppose further that there are no global $L^2$ conformal Killing fields. Then there exists a number $\delta_* >0$ such that if $0<\delta<\delta_*$, then
  \[
  \di L: H^{k+2}_\delta(TM) \to H^k_\delta (TM)
  \] is surjective and the map
  \[
  \di L: H^{k+2}_{-\delta}(TM) \to H^k_{-\delta}(TM)
  \]is injective for every $k\geq0$. Moreover, if $0<\delta<\delta_*$, then for all $k\geq 0$ the map
  \[
  \di L : H^{k+2}_{-\delta}(TM) \oplus \mathscr{Y} \to H^k_{-\delta}(TM)
  \] is surjective with finite dimensional nullspace.
\end{thm}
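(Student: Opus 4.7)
The plan is to invoke the Lockhart--McOwen Fredholm theory for elliptic operators on manifolds with asymptotically cylindrical ends, so my first step is to verify that $\di L$ fits that framework. A direct computation of the principal symbol at a nonzero covector $\xi$ gives
\[
\sigma_\xi(\di L)(X) \;=\; |\xi|^2 X + \bigl(1 - \tfrac{2}{n}\bigr)(\xi\cdot X)\,\xi^{\sharp},
\]
which is invertible on $T_xM$, so $\di L$ is a second-order elliptic operator on vector fields. Integration by parts against a compactly supported test field $Y$ yields
\[
\int_M \langle \di L X, Y\rangle\, dV_g \;=\; -\tfrac{1}{2}\int_M \langle LX, LY\rangle\, dV_g,
\]
which shows that $\di L$ is formally self-adjoint and negative semi-definite.

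Next I would pass to the model operator on each end. On $E_\ell$, $\di L$ is a perturbation of size $\mathcal{O}(e^{-\omega t})$ in the relevant H\"older norm of the translation-invariant operator $P_\ell := \di_{\mr g_\ell + dt^2}L_{\mr g_\ell + dt^2}$ on $\R\times N_\ell$. A separation-of-variables ansatz $X(t,x) = e^{\lambda t}Z(x)$ reduces $P_\ell X = 0$ to a family of eigenvalue equations on the closed slice $N_\ell$, so the indicial roots $\lambda$ form a discrete set. Since $\mathscr{Y}_\ell$ consists of nontrivial bounded conformal Killing fields on $\R \times N_\ell$, $\lambda = 0$ is always an indicial root. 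Choose $0 < \delta_* < \omega$ small enough that $0$ is the only indicial root of any $P_\ell$ in $(-\delta_*,\delta_*)$.

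With $\delta_*$ in hand, Lockhart--McOwen theory (the technical content of \cite{CMP12}) shows that $\di L : H^{k+2}_\delta \to H^k_\delta$ is Fredholm for every $0 < |\delta| < \delta_*$, and formal self-adjointness combined with the $L^2$-pairing gives the duality
\[
\mathrm{coker}\bigl(\di L : H^{k+2}_\delta \to H^k_\delta\bigr) \;\cong\; \ker\bigl(\di L : H^{k+2}_{-\delta} \to H^k_{-\delta}\bigr).
\]
An element $X$ of the right-hand kernel lies in $L^2$ and decays, so the integration-by-parts identity applies and forces $LX \equiv 0$; the hypothesis on global $L^2$ conformal Killing fields then gives $X \equiv 0$. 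This simultaneously establishes the injectivity statement and, via the duality, the surjectivity statement at weight $+\delta$.

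For the final claim I would enlarge the domain by $\mathscr{Y}$. Each cutoff field $\chi_\ell Y_\ell$ with $Y_\ell \in \mathscr{Y}_\ell$ satisfies $\di L(\chi_\ell Y_\ell) \in H^k_{-\delta}$: on $\{t \ge 1\}$ the cutoff is constant and $\di L Y_\ell$ reduces to the $\mathcal{O}(e^{-\omega t})$ error from replacing $g$ by the exactly cylindrical model, while on $\{t \le 1\}$ the expression is compactly supported. By the Lockhart--McOwen relative index formula, the Fredholm index of $\di L$ drops by exactly the dimension of the indicial kernel at $0$ as the weight decreases from $+\delta$ to $-\delta$; adjoining $\mathscr{Y}$ to the domain compensates for precisely this drop, and the surjectivity already proved at weight $+\delta$ transfers to the extended operator. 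Finite-dimensionality of the kernel is immediate from Fredholmness. The main technical obstacle throughout is the Fredholm and parametrix package on cylindrical ends itself --- in particular, establishing the asymptotic expansion that identifies the $\lambda = 0$ indicial kernel with $\mathscr{Y}$ --- which is precisely the content we import from \cite{CMP12}.
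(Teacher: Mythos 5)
The paper does not actually prove this theorem; it imports it verbatim as Theorem~6.1 of \cite{CMP12} and adds a single sentence indicating that the proof there rests on the parametrix construction of \cite{Maz91}, which directly builds a generalized right inverse $G: H^k_{-\delta} \to H^{k+2}_{-\delta}\oplus\mathscr{Y}$ satisfying $\di L \circ G = \mathrm{Id}$. Your route is genuinely different: you use the Lockhart--McOwen Fredholm framework (indicial roots, duality, and the relative index formula) rather than Mazzeo's $b$-calculus parametrix. Both are standard packages for elliptic operators on cylindrical ends, and your preliminary steps --- the symbol computation establishing ellipticity, the integration-by-parts identity giving formal self-adjointness and negativity, and the duality argument reducing surjectivity at weight $+\delta$ to triviality of the $L^2$ kernel --- are all correct and constitute a clean alternative proof of the first two bullet points. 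The paper's one-sentence citation buys nothing explicit; your sketch at least makes the first two claims self-contained modulo the general Fredholm theory.

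The last paragraph, however, contains an imprecision worth flagging. The Lockhart--McOwen relative index formula gives the jump in index across an indicial root as the dimension of the space of \emph{polynomially bounded} (temperate) solutions of the translation-invariant model problem at that root, not merely the dimension of its indicial kernel (the genuinely bounded solutions). For an operator like $\Delta$ on functions on a cylinder, the root $\lambda = 0$ contributes both the constant and the $t$-linear solution, so the index jump is $2$ while the bounded model kernel has dimension $1$. For $\di L$ the theorem's conclusion requires that there be no such secular terms --- i.e.\ that the full temperate model kernel at $\lambda = 0$ coincides with $\mathscr{Y}_\ell$ --- but this is a nontrivial structural fact about the conformal vector Laplacian, not a generic feature of the relative index formula. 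Likewise, "adjoining $\mathscr{Y}$ compensates for precisely this drop" only yields surjectivity once one has an asymptotic-expansion result showing that every $X \in H^{k+2}_\delta$ with $\di L X \in H^k_{-\delta}$ decomposes as $X = X_0 + Y$ with $X_0 \in H^{k+2}_{-\delta}$ and $Y \in \mathscr{Y}$; dimension counting alone does not rule out that $\di L(\mathscr{Y})$ partially overlaps the range of $\di L|_{H^{k+2}_{-\delta}}$. You do acknowledge that these expansion results are imported from \cite{CMP12}, which makes your argument honest, but the index-formula phrasing as written overstates what follows from the general theory without that input.
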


\noindent This theorem is proven by using a parametrix construction found in \cite{Maz91} to build a generalized inverse $G: H^k _{-\delta} \rightarrow H^{k+2}_{-\delta} \oplus \mathscr{Y}$ (the latter space being a subspace of $H^{k+2} _\delta$) which satisfies $\di L \circ G = Id$. The space $\mathscr{Y}$ which appears in the third mapping property above is an artifact of the parametrix structure described in \cite{Maz91}. As shown in that paper, this parametrix is also a bounded map between weighted $L^p$-Sobolev spaces and weighted H\"older spaces, and so Theorem \ref{conformalVectorLaplacian} also holds when one considers $\di L$ as a map between such spaces. The observation that $\di L : W^{k+2,p} _{-\delta} \oplus \mathscr{Y} \rightarrow W^{k,p} _{-\delta}$ is surjective will be crucial in the analysis below.

The final analytical tools we will need are the Sobolev embedding theorems for function spaces on asymptotically cylindrical manifolds. In general, we will assume $p>n$ and that $\delta < \delta_*$, where $\delta_*$ is as in Theorem \ref{conformalVectorLaplacian}. We also need that the coefficients in the Lichnerowicz equation are regular enough to imply that the solution $\phi\in C^2$. This could be achieved by assuming the coefficients are in $W^{1,p}$ or in $C^{0,\beta}$ for some $1>\beta >0$, and thus we make the metric $W^{3,p}$ or $C^{2,\beta}$ to guarantee this regularity of the scalar curvature.

The standard Sobolev embedding theorems hold with a subscript $\delta$. In particular, we have the following theorem.

\begin{thm}\label{embeddings}
  Assume $t\in C^k$ with bounded derivatives up to $k$-th order.
  \begin{itemize}
    \item If $l-\frac{n}{q} = k-\frac{n}{p}$ and $k\geq l$, then $W^{k,p}_\delta(M) \subseteq W^{l,q}_\delta(M)$, with a continuous embedding.
    \item If $k- \frac{n}{p} = r+\alpha$, then $W^{k,p}_\delta(M) \subseteq C^{r,\alpha}_\delta(M)$, with a continuous embedding.
    \item If $k-\frac{n}{p} > r$, then the embedding $W^{k,p}_{\delta}(M) \hookrightarrow C^r_{\delta'}(M)$ is compact for any $\delta'>\delta$.
  \end{itemize}
\end{thm}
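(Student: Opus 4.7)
The plan is to reduce each part of Theorem \ref{embeddings} to its standard (Euclidean) counterpart on a unit ball, and then assemble these local estimates using the uniformly bounded geometry of asymptotically cylindrical manifolds. The key preliminary observation is that since $t \in C^k$ with bounded derivatives, there exists a constant $C_0 > 0$ such that $C_0^{-1} e^{-\delta t(q)} \leq e^{-\delta t(x)} \leq C_0 e^{-\delta t(q)}$ for all $x \in B_1(q)$, so the weight is essentially constant on each unit ball. Consequently the local weighted norms on $B_1(q)$ are uniformly comparable to $e^{-\delta t(q)}$ times their unweighted counterparts. The asymptotically cylindrical structure furthermore supplies a countable cover $\{B_1(q_i)\}$ of $M$ with uniformly bounded multiplicity, on whose members the classical Sobolev and Morrey inequalities hold with a single constant, since each ball sits inside a normal coordinate chart with uniform metric bounds.

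For the two continuous embeddings, I would apply the classical inequality on each $B_1(q_i)$ and multiply through by $e^{-\delta t(q_i)}$. This yields uniform local estimates of the form
\begin{equation*}
\|X\|_{W^{l,q}_\delta; B_1(q_i)} \leq C \|X\|_{W^{k,p}_\delta; B_1(q_i)} \quad \text{and} \quad \|X\|_{r,\alpha,\delta; B_1(q_i)} \leq C \|X\|_{W^{k,p}_\delta; B_1(q_i)}
\end{equation*}
with $C$ independent of $i$. For the Sobolev statement, raise to the $q$-th power and sum over $i$, using the bounded multiplicity of the cover, to recover the global $W^{k,p}_\delta \hookrightarrow W^{l,q}_\delta$ embedding. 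For the H\"older statement, bound the local $W^{k,p}_\delta$ norm by the global one and take the supremum over $q \in M$, which is exactly the definition of the global weighted H\"older norm.

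For the compact embedding, let $\{X_j\}$ be bounded in $W^{k,p}_\delta$ by some $M_0$. The continuous Morrey embedding just established gives the uniform bound $\|X_j\|_{C^r_\delta} \leq C M_0$, hence the pointwise bound $|X_j(q)| \leq C M_0 e^{\delta t(q)}$, which in turn yields the uniform tail estimate
\begin{equation*}
e^{-\delta' t(q)} |X_j(q)| \leq C M_0 \, e^{(\delta - \delta') t(q)} \longrightarrow 0 \quad \text{as } t(q) \to \infty,
\end{equation*}
since $\delta - \delta' < 0$. Exhausting $M$ by the precompact sublevel sets $K_m = \{t \leq m\}$, apply the classical Rellich--Kondrachov theorem on each $K_m$ to extract nested subsequences converging in $C^r(K_m)$, and diagonalize. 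The resulting subsequence converges in $C^r$ on every compact set, and combining this with the uniform tail decay through a standard $\varepsilon/3$-argument upgrades the convergence to $C^r_{\delta'}(M)$. The main technical subtlety throughout is ensuring uniformity of the local Sobolev and Morrey constants, which rests on the bounded geometry inherited from the asymptotically cylindrical ends; once that is in hand, everything is routine absorption of the weight into local constants followed by a covering argument.
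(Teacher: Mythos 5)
Your proof is correct, but for the two continuous embeddings it takes a genuinely different route from the paper. The paper never localizes: it proves that conjugation by the weight is an isomorphism, i.e.\ that $\|f\|_{W^{k,p}_\delta}$ is equivalent to $\|fe^{-\delta t}\|_{W^{k,p}}$ (the nontrivial direction is handled by a contradiction argument, peeling off derivatives of $e^{-\delta t}$ one order at a time using the boundedness of $\nabla^j t$), and then simply invokes the \emph{global unweighted} Sobolev embedding on $M$ applied to $fe^{-\delta t}$. You instead cover $M$ by unit balls of uniformly bounded multiplicity, apply the classical Euclidean inequalities on each ball where the weight is comparable to a constant, and reassemble by summing $q$-th powers (for the $L^q$ case this needs $q\geq p$ and $\ell^p\subseteq\ell^q$, which your hypotheses guarantee, though you should say so) or taking suprema (for the H\"older case). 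Your approach buys explicitness: it makes visible exactly where bounded geometry enters, whereas the paper's argument quietly presupposes the unweighted global embedding $W^{k,p}(M)\subseteq W^{l,q}(M)$ on the noncompact manifold, which would itself be proven by your covering argument. The paper's approach buys a clean reusable norm-equivalence statement and avoids tracking multiplicities. For the compact embedding your argument coincides with the paper's: both reduce to $C^{0,\alpha}_\delta\hookrightarrow C^0_{\delta'}$ (or $C^{r,\alpha}_\delta\hookrightarrow C^r_{\delta'}$), extract a diagonal subsequence via Arzel\`a--Ascoli on a compact exhaustion, and use the strict gain $\delta'>\delta$ to kill the tails. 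One cosmetic remark applying to both you and the paper: when $k-n/p>r$ the difference $k-n/p-r$ may exceed $1$, so one should take $\alpha\in(0,\min(1,k-n/p-r))$ and use the H\"older embedding with an inequality rather than an equality of exponents; this is standard and not a gap.
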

\begin{proof}
The first two parts are proven essentially the same way, so we only include the first. Suppose $f \in W^{k,p}_\delta(M)$. By the definition of the norms and since $\nabla^k t$ are bounded, it is clear that
\[
\|f e^{-\delta t} \|_{W^{l,q}(M)} \leq C\|f\|_{W^{l,q}_\delta(M)}.
\] A similar statement holds in the other direction, but is slightly harder to prove. We want
\[
\|f\|_{W^{k,p}_\delta} \leq C \|f e^{-\delta t}\|_{W^{k,p}}.
\] Suppose not. Then there exists a sequence $f_i$ such that $\|f_i\|_{W^{k,p}_\delta} = 1$ but $\|f_i e^{-\delta t}\|_{W^{k,p}} \leq 1/i$. This condition implies that
\[
\int |f_i|^p e^{-\delta p t} \to 0
\] since each term in (\ref{sobolevnorm}) must got to zero separately. We also have
\[
\int |\nabla (f_i e^{-\delta t})|^p \to 0.
\] Note that
\[
\left(\int |\nabla (f_i e^{-\delta t})|^p\right)^{1/p} \geq \left(\int |\nabla f_i|^p e^{-\delta p t}\right)^{1/p} - \left(\int|\delta \nabla t|^p |f_i|^p e^{-\delta p t}\right)^{1/p}
\] and since the first and last integrals goes to zero (since $|\nabla t|$ is bounded), so must the middle one. We can continue this process to the $k$-th derivative term. Thus $\int |\nabla^j f_i|^p e^{-\delta p t} \to 0$ for any $0\leq j\leq k$, which contradicts our assumption. Thus the inequality holds.

The embedding theorem is now clear since
\[
\|f\|_{W^{k,p}_\delta} \leq C \|f e^{-\delta t}\|_{W^{k,p}} \leq C \|f e^{-\delta t}\|_{W^{l,q}} \leq C \|f\|_{W^{l,q}_\delta}.
\]

We next prove that $C^{0,\alpha}_{\delta} \hookrightarrow C^0_{\delta'}$ is a compact embedding for $\delta'>\delta$. This combined with the second statement gives the third statement. Suppose we have a uniformly bounded family $\{f_i\}\subset C^{0,\alpha}_{\delta}$. Let $K_j$ be an exhaustion of $M$ by compact sets. A standard argument using Arzela-Ascoli gives that there is a convergent subsequence of $f_i$ converging in $C^0$ on $K_j$. Take the subsequence for $K_1$ and reduce it further to subsequence that converges on $K_2$, etc. We redefine $f_1$ to be the first term of the subsequence converging on $K_1$, $f_2$ to be the second term of the subsubsequence converging on $K_2$, etc. Thus this new sequence $f_i$ converges on any compact set of $M$. This is just the standard diagonal subsequence.

Pick $\epsilon>0$. Note that we have $f_i < C e^{\delta t}$ for some uniform $C$. For any $\epsilon$ there is a $t$ large enough such that $C e^{\delta t} < \frac{\epsilon}{2} e^{\delta' t}$ since $\delta'>\delta$; we call the set where this is true $U$. Since $M\setminus U$ is compact, $f_i$ converges in $C^0$ on $M\setminus U$. In particular, since $e^{\delta' t}$ is bounded on $M\setminus U$, we have, for large enough $i,j$, that $|f_i(x) - f_j(x)|e^{-\delta' t} < \epsilon$ for any $x\in M\setminus U$ as well. Thus $f_i$ converges in $C^0_{\delta'}$ in $M\setminus U$. For $x\in U$ we have $|f_i(x) - f_j(x)|e^{-\delta' t} <\epsilon$ by definition. Thus, for any $\epsilon>0$, $|f_i(x) - f_j(x)|e^{-\delta' t} <\epsilon$ for $i,j$ large enough, and so $f_i$ converges everywhere in $C^0_{\delta'}$.
\end{proof}


\section{Summary of main results}

In what follows we will first use a strategy very similar to that in \cite{Leach14} to prove that we may find a solution to the subcritical LCBY equations:

\begin{thm}\label{subcritthm} Let $(g,\sigma,\tau)$ be tame conformal data on a complete Riemannian $n$-manifold $(M,g)$ with a finite number of asymptotically cylindrical ends. Assume that there are no global $L^2$ conformal Killing fields. If $\sigma \in W^{1,p}$, $\tau - \mr \tau \in W^{1,p} _{-\delta}$, and the scalar curvature $R$ satisfies $c_n R + b_n \tau^2 > 0$, then the subcritical LCBY equations
\begin{equation}\label{subcrit1}
\Delta_g \phi - c_n R_g \phi = b_n \tau^2 \phi^{N-1} - c_n |\sigma + LW|_g ^2 \phi^{-N-1}
\end{equation} \begin{equation}\label{subcrit2}
\di LW = \frac{n-1}{n} \phi^{N-\epsilon} d\tau
\end{equation}

\noindent have a solution $(\phi,W)$ with $\phi-\mr\phi \in W^{3,p}_{-\nu}$ for some $\nu>0$ and $\phi>0$ and $W\in W^{2,p}_{-\delta} \oplus \mathscr{Y}$. The function $\mr{\phi}$ is defined below in \eqref{eq:redlich}.
\end{thm}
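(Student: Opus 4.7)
The plan is to mirror the iterative Schauder fixed-point scheme of \cite{Leach14}, adapted so that the strictly subcritical exponent $N-\epsilon$ in \eqref{subcrit2} is what buys the compactness needed to close the iteration. Concretely, I would construct a continuous, compact self-map $T$ on a closed bounded convex set of conformal factors $\phi$ sitting between global positive barriers and having the correct asymptotic profile $\mr\phi$, and apply Schauder's theorem.

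The iteration is set up in two half-steps. Given a candidate $\phi = \mr\phi + u$ with $u$ in a bounded ball of $W^{3,p}_{-\nu}$ and $\phi_-\leq \phi\leq \phi_+$ for barriers to be constructed, first solve \eqref{subcrit2} for $W$. Since $d\tau \in W^{1,p}_{-\delta}$ by hypothesis and $\phi^{N-\epsilon} = \mr\phi^{N-\epsilon} + O_{C^0}(u)$ is bounded, the right-hand side lies in $W^{1,p}_{-\delta}$, and the $L^p$-version of Theorem~\ref{conformalVectorLaplacian} produces $W = W_0 + Y \in W^{2,p}_{-\delta}\oplus \mathscr{Y}$ with an estimate controlled by $\|\phi\|_\infty$. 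Next, with $W$ fixed (hence $|\sigma + LW|^2$ fixed), I would solve \eqref{subcrit1} for $\phi'>0$ by the sub/super-solution method of Maxwell adapted to weighted spaces. The decay of $g$, $\sigma$ and $\tau$ to their cylindrical profiles reduces the equation at infinity to the cross-sectional Lichnerowicz problem whose solution is the $\mr\phi$ appearing in \eqref{eq:redlich}, and Fredholm theory for $\Delta_g - c_n R - (N-1)b_n\tau^2 \mr\phi^{N-2}$ on $W^{3,p}_{-\nu}$ promotes the solution to $\phi'-\mr\phi \in W^{3,p}_{-\nu}$.

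The map $T : \phi \mapsto \phi'$ is then well-defined on the bounded convex subset of $\mr\phi + W^{3,p}_{-\nu}$ determined by the barriers. Compactness is where the subcriticality is essential: the nonlinear assignment $\phi \mapsto \phi^{N-\epsilon} d\tau$ is continuous from $C^0_{-\nu'}$ into $W^{0,p}_{-\delta}$; composing with the bounded right-inverse of $\di L$ from Theorem~\ref{conformalVectorLaplacian} and with the solution operator of \eqref{subcrit1} (whose $W^{3,p}_{-\nu}$ estimates follow from standard elliptic regularity plus the weighted Fredholm theory) yields a map into $W^{3,p}_{-\nu}$, which by Theorem~\ref{embeddings} embeds compactly back into $C^{0,\alpha}_{-\nu'}$ for $\nu'>\nu$. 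Continuity of $T$ follows from Lipschitz dependence of both subproblems on their data. Schauder's theorem then produces a fixed point, which is the desired solution.

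The main obstacle is the construction of the global barriers $\phi_\pm$ uniformly along the iteration. On a closed manifold, $c_n R + b_n \tau^2 > 0$ immediately yields constant sub- and super-solutions by an algebraic balance between the $\phi^{N-1}$ term and the $\phi^{-N-1}$ term, but on the cylindrical ends no constant is admissible, so one has to paste interior constants to multiples of the cross-sectional profile $\mr\phi$ on each end $E_\ell$, and verify that the resulting $\phi_\pm$ remain ordered barriers regardless of the $W$ produced in the iteration. The subcritical exponent $N-\epsilon$ is precisely what prevents a feedback loop in which raising $\phi_+$ to counter a large $|LW|^2$ on the right of \eqref{subcrit1} inflates $\|LW\|$ through \eqref{subcrit2}, so the barriers can be chosen once and for all. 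Once these are in place, the Schauder argument and the asymptotic matching proceed as outlined.
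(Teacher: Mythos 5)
Your proposal follows essentially the same route as the paper: construct coupled (``global'') sub/supersolutions by pasting an interior constant barrier to perturbations $\mr\phi(1\pm b e^{-\nu t})$ of the cross-sectional profile on the ends, use the surjectivity of $\di L$ onto $W^{k,p}_{-\delta}$ from Theorem~\ref{conformalVectorLaplacian} for the vector half-step, and close with Schauder via the compact weighted embedding, with subcriticality entering exactly where you say it does (the $|LW_\phi|^2$ term contributes only $O(B^{N-1-2\epsilon})$ against $-b_n\tau^2B^{N-1}$). The only slip is the direction of the weight inequality for the compact embedding: by Theorem~\ref{embeddings} one needs $-\nu' > -\nu$, i.e.\ $\nu' < \nu$, to land $W^{3,p}_{-\nu}$ compactly in $C^{0}_{-\nu'}$.
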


\noindent Once we know that the system (\ref{subcrit1})-(\ref{subcrit2}) can always be solved, we next establish that such a solution $\phi$ can be uniformly (i.e. independent of $\epsilon$) bounded above by the energy $\|LW\|_{L^2}$. Thus to any sequence of sufficiently small positive numbers $\epsilon_i \rightarrow 0$, we can associate a sequence of energies
\begin{equation}
\gamma_i = \int_M |LW_i|^2
\end{equation}

\noindent where $(\phi_i,W_i)$ is some solution to (\ref{subcrit1})-(\ref{subcrit2}) with $\epsilon = \epsilon_i$. It is natural to ask what happens to the sequence $\{\gamma_i\}$ as $i \rightarrow \infty$. It turns out, as was shown by \cite{DGH11} in the closed manifold case, that this limit gives us valuable information about the solvability of the LCBY equations (\ref{origLich})-(\ref{origVect}). In particular, if this sequence stays bounded, then there exists a solution to the LCBY equations with the same hypotheses on conformal data as in Theorem \ref{subcritthm}. On the other hand, if $\gamma_i \rightarrow \infty$, we show that the limit equation (\ref{limiteq}) admits a nontrivial solution. The precise statement of this result is our main theorem:
\begin{thm}\label{mainTheorem}
  Let $(g,\sigma,\tau)$ be conformal data on an asymptotically cylindrical manifold satisfying the conditions of Theorem \ref{subcritthm}. Then at least one of the following is true:
  \begin{itemize}
    \item The system \eqref{origLich}-\eqref{origVect} admits a solution $(\phi,W)$ with regularity as in Theorem \ref{subcritthm}. Also, the set of these solutions is compact.

    \item There exists a non-zero solution $W \in W^{2,p}_{-\delta}\oplus \mathscr{Y}$ of the limit equation
  \[
  \di LW = \alpha_0 \sqrt{\frac{n-1}{n}} |LW| \frac{d\tau}{\tau}
  \] for some $\alpha_0 \in (0,1]$ such that $|LW| \leq C e^{-\delta t}$ for some $C$ independent of $\phi_i$, $W_i$ and $W$.
  \end{itemize}
\end{thm}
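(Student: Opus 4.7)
The plan is to follow the blueprint of Dahl--Gicquaud--Humbert \cite{DGH11}, adapted to the asymptotically cylindrical setting through the function-space theory of \cite{CMP12} and Theorem \ref{conformalVectorLaplacian}. First, pick a sequence $\epsilon_i \searrow 0$ and apply Theorem \ref{subcritthm} to obtain subcritical solutions $(\phi_i, W_i)$ of \eqref{subcrit1}-\eqref{subcrit2}. The first step is an a priori bound $\|\phi_i\|_\infty \le C(1 + \|LW_i\|_{L^2}^{\alpha})$ with $C$ and $\alpha$ independent of $\epsilon_i$. This follows from the assumption $\tau^2 \ge \tau_0^2 > 0$ together with $c_n R + b_n \tau^2 > 0$: a suitably rescaled constant serves as a supersolution of the Lichnerowicz equation, and a maximum-principle barrier argument bounds $\phi_i$ from above in terms of the $L^2$-energy of $LW_i$.

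Setting $\gamma_i := \|LW_i\|_{L^2}$, the argument splits on whether $\{\gamma_i\}$ is bounded. If it is, the a priori estimate above combined with the surjectivity statement of Theorem \ref{conformalVectorLaplacian} applied in weighted $W^{k,p}$ spaces gives a uniform bound on $(\phi_i,W_i)$ in $W^{3,p}_{-\nu}\times (W^{2,p}_{-\delta}\oplus\mathscr{Y})$. The compact embedding $W^{2,p}_{-\delta}\hookrightarrow C^{1}_{-\delta'}$ from Theorem \ref{embeddings} then produces a strongly convergent subsequence, and using the lower bound $\phi_i \ge \phi_0 > 0$ (another maximum-principle consequence of $b_n\tau^2 > 0$) one passes to the limit in \eqref{subcrit1}-\eqref{subcrit2} as $\epsilon_i \to 0$ to obtain a solution of \eqref{origLich}-\eqref{origVect}. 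Compactness of the solution set is immediate from the same argument applied to any sequence of full LCBY solutions.

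In the case $\gamma_i \to \infty$, rescale by setting $\widetilde W_i := \gamma_i^{-1} W_i$ and $\widetilde \phi_i := \gamma_i^{-\beta}\phi_i$, where $\beta$ is chosen so that the leading terms of the rescaled Lichnerowicz equation balance. The rescaled Lichnerowicz equation then forces, in the limit, a pointwise algebraic identity of the form $b_n \tau^2 \widetilde\phi_\infty^{2(N-2)} = c_n \alpha_0^2 |L\widetilde W_\infty|^2$ for some $\alpha_0 \in (0,1]$, where $\alpha_0$ measures the relative contribution of the $\phi^{N-1}\tau^2$ term versus the $|\sigma+LW|^2\phi^{-N-1}$ term as $i\to\infty$. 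Solving this identity for $\widetilde\phi_\infty^{N}$ and substituting back into the rescaled vector equation \eqref{subcrit2} produces exactly \eqref{limiteq}. Uniform boundedness of $\widetilde W_i$ in $W^{2,p}_{-\delta}\oplus\mathscr{Y}$, so that a weak limit exists, again comes from the elliptic estimates for $\di L$ in those weighted spaces.

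The main obstacle is ruling out that the weak limit $\widetilde W_\infty$ is identically zero. On a noncompact manifold the normalization $\|L\widetilde W_i\|_{L^2} = 1$ could in principle be realized by mass that escapes down the cylindrical ends, leaving $\widetilde W_\infty\equiv 0$ on every fixed compact region. To prevent this, I plan to establish a uniform pointwise decay $|L\widetilde W_i|\le C e^{-\delta t}$ with $C$ independent of $i$. Such a decay follows from two observations: elements of $\mathscr{Y}$ are conformal Killing fields of the \emph{exactly} cylindrical metric, so their $L$-image with respect to the asymptotically cylindrical $g$ decays at rate $\mathcal{O}(e^{-\omega t})$ by the hypothesis on $g$; and the $W^{2,p}_{-\delta}$-component of $\widetilde W_i$ enjoys a uniform $C^1_{-\delta}$ bound via Theorem \ref{embeddings} and the weighted generalized inverse of $\di L$ from \cite{CMP12}. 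These estimates confine a definite fraction of the $L^2$-energy of $L\widetilde W_i$ to a fixed compact set, on which strong convergence forces $L\widetilde W_\infty\not\equiv 0$. The same uniform decay passes to the limit to give the concluding bound $|LW_\infty|\le C e^{-\delta t}$.
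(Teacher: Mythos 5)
Your outline follows the same Dahl--Gicquaud--Humbert blueprint as the paper, but there is a genuine gap in the very first step, where you claim the $\epsilon$-independent a priori bound $\|\phi_i\|_\infty \le C(1+\|LW_i\|_{L^2}^\alpha)$ follows from ``a suitably rescaled constant supersolution and a maximum-principle barrier argument.'' A constant supersolution plus the maximum principle only yields a bound on $\phi_i$ in terms of $\|\sigma+LW_i\|_\infty$, while the elliptic estimate for the vector equation gives $\|LW_i\|_\infty \lesssim \|\phi_i\|_\infty^{N}\|d\tau\|$. Chaining these is circular and produces nothing. The whole point of this step is to trade the (available) $L^2$ control of $LW_i$ for $L^\infty$ control of $\phi_i$, and that requires an integral argument: the paper proves this via a Moser-type iteration (Lemma~\ref{bound} together with the four steps of Proposition~\ref{phiBound}), multiplying the rescaled Lichnerowicz equation by $e^{-\delta t}\tilde\phi^{N+1+Nk}$, integrating by parts, using H\"older and the weighted Sobolev embeddings of Theorem~\ref{embeddings}, and bootstrapping the integrability of $\tilde\phi$ alternately with weighted $L^{r_i}$ bounds on $L\tilde W$ from Theorem~\ref{conformalVectorLaplacian}, before a maximum principle is invoked at the very end (Step~4). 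Without this iteration, your bound has no derivation.

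Two smaller points in the blow-up analysis. The limiting algebraic identity is $b_n\tau^2\tilde\phi_\infty^{2N} = c_n|L\tilde W_\infty|^2$ (exponent $2N$, not $2(N-2)$), and it does \emph{not} contain $\alpha_0$. The parameter $\alpha_0$ arises solely from the rescaled vector equation \eqref{newVect}, as $\alpha_0 = \lim_i \tilde\gamma_i^{-\epsilon_i/(2N)}$; it has nothing to do with a balance inside the Lichnerowicz equation, where both the $\tau^2\phi^{2N}$ and $|LW|^2$ terms carry the same weight $\tilde\gamma$ and balance exactly once the elliptic term (which carries the extra factor $\tilde\gamma^{-1/n}$) drops out. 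Also, passing from the pointwise algebraic limit to actual $L^\infty$ convergence $\tilde\phi_i\to\tilde\phi_\infty$ is not automatic; the paper constructs explicit barriers $\tilde\phi_\pm$ sandwiching $\tilde\phi_\infty$ and shows they are super/subsolutions of the rescaled equation for $i$ large, then applies the maximum principle to each $\tilde\phi_i$. Your nontriviality argument for $\tilde W_\infty$ (uniform exponential decay of $L\tilde W_i$ prevents energy from escaping down the ends) is in the right spirit and matches what the paper does implicitly by obtaining convergence of $L\tilde W_i$ in $C^0_{-\delta'}$, which dominates the $L^2$ tails uniformly.
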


Note that we have no reason to suspect that this result is a dichotomy. That is, it may hold that the LCBY equations and the limit equation \textit{both} admit nontrivial solutions. The utility of the main theorem is that we may show the existence of solutions to the LCBY equations whenever we can show that the limit equation (\ref{limiteq}) admits no solutions. The idea behind the proof is that given any sequence $\{\gamma_i\}$ as above, if the limit equation admits no solutions then the sequence stays bounded, which in turn implies that the LCBY equations admit a solution (these facts are proven in section 4 below). From the perspective of one trying to understand the constraint equations on asymptotically cylindrical manifolds, this result is only useful if we can find examples of conformal data which admit no solutions to the limit equation. In section 5, we show that this set is nonempty by proving the following corollary.
\begin{cor}
Let $(g,\sigma,\tau)$ be conformal data on an asymptotically cylindrical manifold satisfying the conditions of Theorem \ref{subcritthm}, and suppose $\Ric \leq (c_1\chi^2 - c_2 e^{-2\mu t}) g$ for some constants $c_i, \mu >0$ and a smooth compactly supported bump function $\chi$. Then there is some $C > 0$ such that if
\[
\left\|\frac{d\tau}{\tau} \right\|_{C^0_{-\mu}} < C,
\] there is a solution to the LCBY equations (\ref{origLich})-(\ref{origVect}).
\end{cor}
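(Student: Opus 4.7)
The plan is to apply the dichotomy of Theorem \ref{mainTheorem} and rule out the limit-equation alternative by an energy argument whenever $\|d\tau/\tau\|_{C^0_{-\mu}}$ is small. If the LCBY system has no solution, the main theorem supplies a nonzero $W \in W^{2,p}_{-\delta} \oplus \mathscr{Y}$ satisfying the limit equation with $|LW| \leq C_0 e^{-\delta t}$. Pair the limit equation with $W$ and integrate by parts---the boundary terms at infinity vanish because $LW$ decays exponentially while $W$ is at most bounded plus decay---to obtain
\[
\tfrac{1}{2}\int_M |LW|^2 \, dV = -\alpha_0 \sqrt{\tfrac{n-1}{n}} \int_M |LW|\left\langle W, \tfrac{d\tau}{\tau}\right\rangle dV.
\]
Two applications of Cauchy-Schwarz together with the pointwise bound $|d\tau/\tau| \leq \|d\tau/\tau\|_{C^0_{-\mu}} e^{-\mu t}$ then yield
\[
\int_M |LW|^2 \, dV \leq \frac{4(n-1)}{n}\left\|\tfrac{d\tau}{\tau}\right\|_{C^0_{-\mu}}^{2} \int_M e^{-2\mu t}|W|^2 \, dV.
\]

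The complementary and more delicate half of the argument is the coercivity estimate
\[
\int_M |LW|^2 \, dV \geq A \int_M e^{-2\mu t}|W|^2 \, dV
\]
for some $A>0$ depending only on $c_1$, $c_2$, $\mu$, $\chi$, and the geometry. The integrated Bochner identity for the conformal Killing operator,
\[
\int_M |LW|^2 \, dV = 2\int_M |\nabla W|^2\, dV + 2\bigl(1-\tfrac{2}{n}\bigr)\int_M (\di W)^2 \, dV - 2\int_M \Ric(W,W)\, dV,
\]
combined with the Ricci upper bound produces
\[
\int_M |LW|^2 \, dV \geq 2\int_M |\nabla W|^2 \, dV + 2c_2 \int_M e^{-2\mu t}|W|^2 \, dV - 2c_1 \int_M \chi^2|W|^2 \, dV.
\]
The main obstacle is absorbing the bad term $2c_1\int \chi^2|W|^2$. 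Since $\chi$ is supported in a compact set $K$ on which $e^{-2\mu t}$ is bounded below by a positive constant, one has
\[
\int_M \chi^2 |W|^2 \, dV \leq \|\chi\|_\infty^2 \int_K |W|^2 \, dV \leq C_K \int_M e^{-2\mu t}|W|^2 \, dV
\]
with $C_K$ determined by $\chi$, $\mu$, and $K$, and this directly yields coercivity in the regime $c_2 > c_1 C_K$. More generally one keeps the $2\int |\nabla W|^2$ term and argues by contradiction/compactness: a hypothetical violating sequence $W_i$ normalized by $\int e^{-2\mu t}|W_i|^2 = 1$ is bounded in a weighted $W^{1,2}$ space via Bochner, hence by the Rellich-type embedding implicit in Theorem \ref{embeddings} converges in weighted $L^2$ to a nonzero $W_\ast$ with $LW_\ast = 0$; re-applying Bochner with the Ricci bound on the ends forces $W_\ast$ to vanish there, and the standing absence of $L^2$ conformal Killing fields together with unique continuation rules out a surviving compactly supported piece.

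Combining the Cauchy-Schwarz bound with the coercivity estimate forces
\[
A \leq \tfrac{4(n-1)}{n}\left\|\tfrac{d\tau}{\tau}\right\|_{C^0_{-\mu}}^{2},
\]
so setting $C := \sqrt{nA/(4(n-1))}$ produces a contradiction whenever $\|d\tau/\tau\|_{C^0_{-\mu}}<C$. Hence the limit equation admits only the trivial solution, and Theorem \ref{mainTheorem} delivers the desired solution of the LCBY system.
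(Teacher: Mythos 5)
Your argument follows the paper's own proof almost step for step: pair the limit equation with $W$ and apply Cauchy--Schwarz with the weight $e^{-\mu t}$, then establish the coercivity estimate $\int|LW|^2 \geq A\int e^{-2\mu t}|W|^2$ via the Bochner identity for $\di L$, the Ricci upper bound, and a compactness/contradiction argument (with the no-$L^2$-conformal-Killing-field hypothesis) to absorb the compactly supported bad term $c_1\int\chi^2|W|^2$. The cosmetic differences (normalizing the violating sequence by $\int e^{-2\mu t}|W_i|^2=1$ rather than $\int\chi^2|W_i|^2=1$, and invoking unique continuation for the limit field) do not change the substance; note only that with your normalization you must still use the Bochner inequality to force a definite amount of $L^2$ mass onto $\supp\chi$, or the local limit $W_\ast$ could vanish.

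The one genuine gap is your direct application of the integrated Bochner identity to the limit-equation solution $W \in W^{2,p}_{-\delta}\oplus\mathscr{Y}$. For $W$ with a nontrivial $\mathscr{Y}$-component (e.g.\ a cut-off Killing field of the cylindrical cross-section), $\nabla W$ does not decay, so $\int_M|\nabla W|^2$ and $\int_M\Ric(W,W)$ are individually infinite even though $\int_M|LW|^2$ is finite; the identity as you write it is $\infty-\infty$ and the subsequent inequality cannot be read off from it. The paper avoids this by proving the coercivity estimate only for compactly supported fields $V$ (where every Bochner term is finite) and then passing to $W$ via cutoffs $W_i=\eta_i W$, using only that $LW$ decays exponentially and $|W|$ is bounded to control $\bigl|\int|LW|^2-\int|LW_i|^2\bigr|$ and $\bigl|\int e^{-2\mu t}|W|^2-\int e^{-2\mu t}|W_i|^2\bigr|$. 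Your proof needs this (or an equivalent) approximation step to be complete; with it inserted, the rest of your argument goes through.
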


Finally, in section 6 we discuss how these results can be extended to the case where the metric $g$ is only assumed to be \textit{conformally} asymptotically cylindrical.


\section{Global barriers} \label{barriers}

In this section we prove Theorem \ref{subcritthm} by constructing a global supersolution and a global subsolution (defined below) with desirable asymptotics. Throughout this section, we assume for notational simplicity that our manifold has a single asymptotically cylindrical end. It is easy to see that this assumption results in no loss of generality, for the extension of our analysis to multiple ends is completely trivial. We shall make use of the following lemma, whose proof is nearly identical to one found in \cite{Leach14}:

\begin{lem} \label{dwbd}
Let $\phi$ be a bounded continuous function on a manifold $M$ which does not admit any $L^2$ conformal Killing fields and $d \tau \in C^0 _{-\delta'} (TM)$ for some positive $\delta' < \delta_*$. If $W_\phi$ is the solution of the subcritical momentum constraint equation associated to a conformally asymptotically cylindrical metric, then for any $\delta < \delta'$, there exists some constant $K$ which does not depend on $\phi$ or $\tau$ such that the following pointwise estimate holds:
\begin{equation}
|LW_\phi | \leq K \|d \tau\|_{0,-\delta'} \|\phi\|_0 ^{N-\epsilon} e^{-\delta t}.
\end{equation}
\end{lem}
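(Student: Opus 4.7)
The plan is to invert the subcritical momentum constraint using the generalized right inverse of $\di L$ from Theorem \ref{conformalVectorLaplacian}, bound the source in a weighted $L^p$ space, and then convert back to a pointwise bound via the Sobolev embedding of Theorem \ref{embeddings}. Fix $p>n$ and a weight $\delta\in(0,\min\{\delta',\delta_*,\omega\})$. The pointwise inequality
\[
|\phi^{N-\epsilon}\,d\tau|(x)\;\leq\;\|\phi\|_0^{N-\epsilon}\,\|d\tau\|_{0,-\delta'}\,e^{-\delta' t(x)}
\]
together with finiteness of $\int e^{p(\delta-\delta')t}\,dV_g$ on each cylindrical end gives
\[
\bigl\|\phi^{N-\epsilon}\,d\tau\bigr\|_{L^p_{-\delta}}\;\leq\; C\,\|\phi\|_0^{N-\epsilon}\,\|d\tau\|_{0,-\delta'},
\]
with $C$ depending only on $p,\delta,\delta'$, and the geometry near infinity.

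The excerpt already recalls that the Mazzeo parametrix underlying Theorem \ref{conformalVectorLaplacian} produces a generalized right inverse $G:W^{k,p}_{-\delta}(TM)\to W^{k+2,p}_{-\delta}(TM)\oplus\mathscr{Y}$ of $\di L$, bounded uniformly in the source. I would set $W_\phi:=\tfrac{n-1}{n}\,G(\phi^{N-\epsilon}\,d\tau)$ and split $W_\phi=W_1+Y$ with $W_1\in W^{2,p}_{-\delta}$ and $Y\in\mathscr{Y}$, so that
\[
\|W_1\|_{W^{2,p}_{-\delta}}+\|Y\|_{\mathscr{Y}}\;\leq\; C'\,\|\phi\|_0^{N-\epsilon}\,\|d\tau\|_{0,-\delta'}.
\]
Because $p>n$, the embedding $W^{2,p}_{-\delta}\hookrightarrow C^{1,\alpha}_{-\delta}$ from Theorem \ref{embeddings} (with $\alpha=1-n/p$) controls $\nabla W_1$ pointwise, giving $|LW_1|(x)\leq C''\,\|\phi\|_0^{N-\epsilon}\,\|d\tau\|_{0,-\delta'}\,e^{-\delta t(x)}$.

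For the $\mathscr Y$-component, $Y$ is a cutoff of a conformal Killing field of the exactly cylindrical metric $dt^2+\mr g_\ell$, so the associated operator $L^{\mr g}Y$ vanishes on $\{t\geq 1\}$; the decay hypothesis $|\mr\nabla^k(g-(dt^2+\mr g_\ell))|=\mathcal{O}(e^{-\omega t})$ together with boundedness of $Y$ and its derivatives then yields $|LY|\leq C\,\|Y\|_{\mathscr Y}\,e^{-\omega t}$. Since $\mathscr Y$ is finite dimensional, all norms on it are equivalent, so the $\|Y\|_{\mathscr Y}$ factor is already dominated by the same source norm appearing on the right. Adding the two estimates and using $\delta\leq\omega$ produces the claimed pointwise bound with $K$ independent of $\phi$ and $\tau$. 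The only delicate point I foresee is confirming that the $L^p$-version of the parametrix delivers an operator bound that is uniform in the source — but this is precisely what is recalled in the excerpt, so the rest is careful bookkeeping with the weights.
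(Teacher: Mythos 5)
Your proposal is correct and follows what is essentially the only natural route given the tools the paper sets up: bound the source $\phi^{N-\epsilon}d\tau$ in $L^p_{-\delta}$ using the sup-norm of $\phi$ and the weighted $C^0$ bound on $d\tau$; apply the bounded generalized inverse $G$ of $\di L$ from Theorem \ref{conformalVectorLaplacian} (the same $G$ the paper invokes explicitly in the Schauder argument); split the image into a $W^{2,p}_{-\delta}$ piece controlled pointwise via the embedding $W^{2,p}_{-\delta}\hookrightarrow C^{1,\alpha}_{-\delta}$, and a finite-dimensional $\mathscr{Y}$ piece on which $L$ produces exponential decay because its elements are cutoffs of exact conformal Killing fields of the model cylinder. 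The paper deliberately defers this argument to \cite{Leach14}, but it also invokes precisely the fact that ``$L$ maps $\mathscr{Y}$ to an exponentially decaying piece'' in Step 2 of Proposition \ref{phiBound}, confirming that your handling of the $\mathscr{Y}$-component is the intended one. The only small caveat worth recording is that your pointwise bound on $|LY|$ decays at the rate $\omega$ of the metric rather than at an arbitrary $\delta<\delta'$, so your proof establishes the estimate for $\delta<\min\{\delta',\delta_*,\omega\}$; the lemma as stated claims it for every $\delta<\delta'$. In practice $\delta'$ is taken small relative to $\omega$ so this is harmless, but you should state the restriction rather than leave it implicit, or note that one can shrink $\delta'$ at the outset without loss.
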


\noindent We define the constant $K_\tau = K \|d \tau\|_{0,-\delta'}$ for brevity in our analysis below. Note in particular that the lemma implies that $|\sigma + LW|_g ^2 \rightarrow \mr{\sigma}^2$ on the end. Hence we expect a solution which has an asymptotic limit on the cylindrical end to approach a solution to the \textit{reduced Lichnerowicz equation}, which is
\begin{equation} \label{eq:redlich}
\Delta_h \mr{\phi} - c_n \mr{R} \mr{\phi} = b_n \mr{\tau}^2 \mr{\phi}^{N-1} - c_n \mr{\sigma}^2 \mr{\phi}^{-N-1}.
\end{equation}

\noindent Given tame data $(g,\sigma,\tau)$, this equation admits a constant supersolution. Moreover, one easily checks that if $\mr\phi_1$ is a positive solution to the equation
\begin{equation*}
\Delta_h u - (c_n \mr{R} + b_n \tau^2) u = -\mr\sigma^2,
\end{equation*}

\noindent then $\rho\mr\phi_1$ is a subsolution of the reduced equation for any sufficiently small $\rho>0$. Hence (\ref{eq:redlich}) admits a solution by Proposition A.4 in \cite{CM12}. We note that this is the only place where we use the requirement that $\sigma \not\equiv 0$ unless $R<0$ on the ends. We extend this solution to a smooth function on all of $M$ which we continue to call $\mr\phi$.

Next, we define the nonlinear operator $\Lich_\phi$ on $C^2$ by
\begin{equation}
\Lich_\phi (u) := \Delta_g u - c_n R u - b_n \tau^2 u ^{N-1} + c_n |\sigma + LW_\phi|^2 _g u ^{-N-1}
\end{equation}

\noindent where $W_\phi$ is the solution of the subcritical momentum constraint equation (\ref{subcrit2}) (such a solution exists by Theorem 6.1 in \cite{CMP12}). We call $\phi_+$ a \textit{global supersolution} if for any $0 < \phi \leq \phi_+$, we have $\Lich_\phi (\phi_+) \leq 0$. If $\phi_+$ is of lower regularity so that this inequality only holds in the weak sense, we call it a \textit{weak global supersolution}. Given a (weak) global supersolution $\phi_+$, we define an associated \textit{(weak) global subsolution} $\phi_- >0$ which, for any $\phi_- \leq \phi \leq \phi_+$, satisfies $\Lich_\phi (\phi_-) \geq 0$. Having constructed a pair of weak global sub/supersolutions $\phi_- \leq \phi_+$ for the system (\ref{subcrit1})-(\ref{subcrit2}) which both approach $\mr{\phi}$ asymptotically on the end, one argues as in \cite{Leach14} and applies the Schauder fixed point theorem to produce a solution to the system.

In what follows, we construct a pair of weak global sub/supersolutions for the subcritical LCBY equations. The construction below is independent of the Yamabe invariant of the background metric, though we assume as in \cite{CM12} that $c_n R + b_n \tau^2 > 0$ everywhere. If one drops this assumption in the case where the Yamabe invariant is positive, the construction of a pair of weak global sub/supersolutions proceeds almost exactly as in \cite{Leach14}.


\subsection{Global supersolution}

We first show that a sufficiently large constant $B$ is a global supersolution for the subcritical LCBY system. Suppose $0 < \phi \leq B$. Using Lemma \ref{dwbd} we have (dropping all subscripts ``$g$'')
\begin{align*}
\Lich_\phi (B) &= -c_n R B - b_n \tau^2 B^{N-1} + c_n |\sigma + LW_\phi|^2 B^{-N-1} \\
&\leq -c_n R B - b_n \tau^2 B^{N-1} + c_n B^{-N-1}(|\sigma|^2 + 2 |\sigma||LW_\phi| +|LW_\phi|^2) \\
&\leq -c_n R B - b_n \tau^2 B^{N-1} + C_1 B^{-N-1} + C_2 B^{-1-\epsilon} + C_3 B^{N-1-2\epsilon}
\end{align*}

\noindent where we have used Lemma \ref{dwbd} and defined $C_1 = c_n \sup_M |\sigma|^2$, $C_2 = 2 c_n \sup_M |\sigma| K_\tau$, and $C_3 = c_n K_\tau^2$. Since $\tau^2 \geq \tau_0 ^2 > 0$, we see that this final expression is negative for sufficiently large $B$ and hence such a constant is a global supersolution. Notice that subcriticality was crucial in establishing this fact.

We would like to construct a supersolution which approaches the function $\mr\phi$ on the ends asymptotically. This way the solution we obtain from the fixed point theorem will also approach $\mr\phi$. For this we prove the following proposition whose proof is similar to Proposition 3.5 in \cite{Leach14} and Theorem 4.3 in \cite{CM12}:

\begin{prop} \label{limitsup}
Choose some positive $\nu < \delta/(2N+2)$ so small that $\Lich_0 (\mr\phi) = \mathcal{O}(e^{-2\nu t})$. With the initial data above, there exists some $T > 0$ such that the function $\mr\phi (1 + b e^{-\nu t})$ is a global supersolution for any sufficiently large $b$ on the set where $t \geq T$.
\end{prop}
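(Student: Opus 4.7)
The plan is to substitute $\phi_+ = \mr\phi(1+\eta)$ with $\eta := be^{-\nu t}$ into $\Lich_\phi$, use the reduced Lichnerowicz equation to kill the $\mathcal{O}(1)$ background pieces, and then extract a strictly negative $\mathcal{O}(\eta)$ contribution that dominates all residual error terms on $\{t\geq T\}$. After expanding $\Delta_g(\mr\phi(1+\eta)) = (1+\eta)\Delta_g\mr\phi + 2\nabla\mr\phi\cdot\nabla\eta + \mr\phi \Delta_g\eta$ and substituting $\Delta_g\mr\phi - c_n R\mr\phi = b_n\tau^2\mr\phi^{N-1} - c_n|\sigma|^2\mr\phi^{-N-1} + \Lich_0(\mr\phi)$, the resulting identity takes the form
\[
\Lich_\phi(\phi_+) = -b_n\tau^2\mr\phi^{N-1}(1+\eta)\bigl[(1+\eta)^{N-2}-1\bigr] - c_n|\sigma|^2\mr\phi^{-N-1}(1+\eta)\bigl[1-(1+\eta)^{-N-2}\bigr] + \mr\phi \Delta_g\eta + \mathcal{E},
\]
where $\mathcal{E}$ collects $(1+\eta)\Lich_0(\mr\phi)$, $2\nabla\mr\phi\cdot\nabla\eta$, and $c_n(|\sigma+LW_\phi|^2-|\sigma|^2)\phi_+^{-N-1}$. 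Both explicit nonlinear terms are non-positive for $\eta\geq 0$ when $N>2$.

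The key calculus fact is the universal lower bound $(1+\eta)\bigl[(1+\eta)^{N-2}-1\bigr]\geq(N-2)\eta$ for all $\eta\geq 0$ and $N>2$, verified by checking that $\eta\mapsto(1+\eta)[(1+\eta)^{N-2}-1]/\eta$ is increasing on $(0,\infty)$ with limit $N-2$ as $\eta\to 0^+$. Combined with $\Delta_g\eta = \nu^2\eta + \mathcal{O}(e^{-\omega t})\eta$ on the end (since $|\nabla t|^2\to 1$ and $\Delta_g t\to 0$), choosing $\nu$ small enough that $\nu^2(\sup\mr\phi)\leq \tfrac{1}{2}(N-2)b_n\tau_0^2(\inf\mr\phi)^{N-1}$ yields
\[
-b_n\tau^2\mr\phi^{N-1}(1+\eta)\bigl[(1+\eta)^{N-2}-1\bigr]+\mr\phi\Delta_g\eta \leq -C_0\eta
\]
on the end, with $C_0>0$ independent of $b$ and $t$. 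The error terms decay as follows: $(1+\eta)\Lich_0(\mr\phi)=\mathcal{O}(e^{-2\nu t})(1+be^{-\nu t})$; $2\nabla\mr\phi\cdot\nabla\eta=\mathcal{O}(\nu be^{-(\nu+\omega)t})$ because $\mr\phi$ is $t$-independent on the exact cylinder so only the AC correction contributes a radial component; and by Lemma \ref{dwbd}, the $LW_\phi$-contribution is $\mathcal{O}(\|\phi\|_0^{N-\epsilon}e^{-\delta t})$. Pairing $\phi_+$ on $\{t\geq T\}$ with the constant supersolution $B$ elsewhere to form a genuine global supersolution, we have $\|\phi\|_0\leq\max\{B,(\sup\mr\phi)(1+b)\}$, so this last error is $\mathcal{O}(b^{N-\epsilon}e^{-\delta t})$. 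Since $\nu<\delta/(2N+2)<\delta$, every contribution to $\mathcal{E}$ decays strictly faster than $be^{-\nu t}$ after the polynomial $b$-factors are absorbed by $e^{-(\delta-\nu)t}$, so taking $T$ large produces $\mathcal{E}\leq\tfrac{1}{2}C_0\eta$ and hence $\Lich_\phi(\phi_+)\leq 0$ on $\{t\geq T\}$.

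The main obstacle is that $\eta$ need not be small at $t=T$ when $b$ is large, so naive Taylor expansion is inadmissible; the universal bound $(1+\eta)[(1+\eta)^{N-2}-1]\geq(N-2)\eta$ bypasses this by extracting the negative contribution globally, even in the concave regime $N<3$ that occurs for $n\geq 6$ (where $(1+\eta)^{N-2}-1$ alone fails to be convexly bounded below by a multiple of $\eta$). A secondary issue is that $W_\phi$ depends on the global values of $\phi$, so controlling $|LW_\phi|$ on the end requires uniform control of $\|\phi\|_0$ across $M$, which is precisely why the proposition must be used in tandem with the constant supersolution from the previous subsection. The specific choice $\nu<\delta/(2N+2)$ gives enough margin in $e^{-(\delta-\nu)t}$ to swallow the polynomial-in-$b$ factor produced by $\|\phi\|_0^{N-\epsilon}$.
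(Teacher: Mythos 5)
Your algebraic decomposition is correct, the calculus fact $(1+\eta)\bigl[(1+\eta)^{N-2}-1\bigr]\geq(N-2)\eta$ is valid for all $N>2$, and working directly with $\Lich_\phi$ rather than passing to the conformal metric $\tilde g=\mr\phi^{N-2}g$ (as the paper does) is a legitimate and somewhat cleaner route. However, there is a genuine gap: the linear lower bound on the nonlinearity extracts too little negativity to yield a threshold $T$ that is independent of $b$, which is what the proposition requires. The quantifier is ``$\exists\,T$ such that for any sufficiently large $b$,'' and the subsequent gluing step (choose $b$ large so that $\mr\phi(1+be^{-\nu t})>B$ on $\{t\leq T\}$) only works if $T$ is fixed before $b$ is sent to infinity.

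Concretely, your negative term is $-C_0\eta=-C_0\,b\,e^{-\nu t}$, linear in $b$, while your own estimate $\|\phi\|_0\leq\max\{B,(\sup\mr\phi)(1+b)\}=\mathcal{O}(b)$ combined with Lemma~\ref{dwbd} gives $|LW_\phi|=\mathcal{O}(b^{N-\epsilon}e^{-\delta t})$. The quadratic piece $c_n|LW_\phi|^2\phi_+^{-N-1}$ of $\mathcal{E}$, which your accounting does not explicitly separate out, is then $\mathcal{O}(b^{2N-2\epsilon}e^{-2\delta t})$, or $\mathcal{O}(b^{N-1-2\epsilon}e^{((N+1)\nu-2\delta)t})$ if one also uses $\phi_+^{-N-1}\leq C b^{-N-1}e^{(N+1)\nu t}$. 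In either form the power of $b$ strictly exceeds $1$ for small $\epsilon$, so enforcing $\mathcal{E}\leq\tfrac12 C_0\eta$ forces $T\gtrsim\log b$: this is exactly what your remark that ``the polynomial $b$-factors are absorbed by $e^{-(\delta-\nu)t}$'' is eliding. The paper avoids this by retaining the full nonlinearity $(N-1)\eta+1-(1+\eta)^{N-1}$, whose magnitude grows like $\eta^{N-1}=b^{N-1}e^{-(N-1)\nu t}$, producing a negative contribution of order $b^{N-1}$ that beats the $b^{N-1-2\epsilon}$ error uniformly in $t\geq T$ for a fixed $T$; in your notation the needed refinement is $h(\eta)\geq(N-2)\eta+\eta^{N-1}$ for $N\geq 3$, and a separate Taylor estimate near $\eta=0$ for $N<3$ --- this is precisely why the paper splits into cases, something your weaker universal bound conveniently but misleadingly avoids. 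Alternatively, one can salvage your linear bound by replacing $\|\phi\|_0=\mathcal{O}(b)$ with the sharper $\|\phi\|_0\leq B$ (justified since $\phi$ must lie below the eventual weak global supersolution $\inf(B,\mr\phi(1+\eta))\leq B$), which strips the $b$-dependence from the error entirely; but that is not the estimate your proof invokes.
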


\noindent Once we establish this proposition, one easily sees that we can choose $b$ so large that $\mr\phi (1 + b e^{-\nu t}) > B$ whenever $t \leq T$ and therefore $\inf(B,\mr\phi (1 + b e^{-\nu t}))$ is continuous and thus a weak global supersolution with the desired asymptotics.

\proof Define a new metric $\tilde{g} = \mr\phi ^{N-2}g$, and denote with a tilde all operators and quantities associated to this metric. The Lichnerowicz equation is well known to be conformally covariant, which means that $\mr\phi u$ is a solution to
\begin{equation*}
\Delta_g (\mr\phi u) - c_n R_g \mr\phi u = b_n \tau^2 (\mr\phi u)^{N-1} - c_n |\sigma|_g^2 (\mr\phi u)^{-N-1}
\end{equation*}

\noindent if and only if $u$ is a solution to
\begin{equation*}
\Delta_{\tilde{g}} u - c_n R_{\tilde{g}} u = b_n \tau^2 u^{N-1} - c_n |\mr\phi^{-2} \sigma |_{\tilde{g}} ^2 u^{-N-1}.
\end{equation*}

\noindent It follows from this pointwise property that for any $\phi \leq \mr\phi (1 + b e^{-\nu t})$, we have
\begin{align*}
\Lich_\phi (\mr\phi (1 + b e^{-\nu t}))&= (\Delta - c_n R)(\mr\phi (1 + b e^{-\nu t})) - b_n \tau^2 (\mr\phi (1 + b e^{-\nu t}))^{N-1} \\
&\quad + c_n |\sigma + LW_\phi|^2 (\mr\phi (1 + b e^{-\nu t}))^{-N-1} \\
&= (\tilde \Delta - c_n \tilde R )(1 + b e^{-\nu t}) - b_n \tau^2 (1 + b e^{-\nu t})^{N-1} \\
&\quad + c_n |\mr\phi^{-2} (\sigma + LW_\phi)|^2 _{\tilde{g}} (1 + b e^{-\nu t})^{-N-1} \\
&= (\tilde \Delta - c_n \tilde R )(1 + b e^{-\nu t}) - b_n \tau^2 (1 + b e^{-\nu t})^{N-1} \\
&\quad + c_n |\tilde\sigma + \mr\phi^{-2} LW_\phi|^2 _{\tilde{g}} (1 + b e^{-\nu t})^{-N-1}.
\end{align*}

\noindent Next observe by the covariance of the conformal Laplacian that
\begin{equation} \label{conflap}
(\tilde \Delta - c_n \tilde R)u = \mr\phi^{-(N-1)} (\Delta - c_n R)(\mr\phi u)
\end{equation}

\noindent for any function $u\in C^2$, and so taking $u \equiv 1$ gives us
\begin{align*}
-c_n \tilde R &= \mr\phi^{-(N-1)} (b_n \tau^2 \mr\phi^{N-1} - c_n |\sigma|^2 \mr\phi^{-N-1} + \mathcal{O}(e^{-2 \nu t})) \\
&= b_n \tau^2 - c_n |\tilde\sigma|^2 _{\tilde{g}} + s
\end{align*}

\noindent where $s$ is some function satisfying $|s| \leq C e^{-2 \nu t}$. We thus see that $\Lich_\phi (\mr\phi (1 + b e^{-\nu t}))$ is given by the expression
\begin{multline*}
(\tilde \Delta + b_n \tau^2 - c_n |\tilde\sigma|^2 _{\tilde{g}} + s) (1 + b e^{-\nu t}) - b_n \tau^2 (1 + b e^{-\nu t})^{N-1}  \\
+ c_n |\tilde\sigma + \mr\phi^{-2} LW_\phi|^2 _{\tilde{g}} (1 + b e^{-\nu t})^{-N-1}
\end{multline*}

\noindent which, after adding and subtracting $(N-1)b_n \tau^2 b e^{-\nu t}$, we may rewrite as
\begin{multline*}
 \left(\tilde \Delta - \left(c_n |\tilde\sigma|_{\tilde{g}}^2 + (N-2) b_n \tau^2 - s\right)\right) (b e^{-\nu t}) + b_n \tau^2 \left((N-1)b e^{-\nu t}\right. \\\left.+ 1 - (1 + b e^{-\nu t})^{N-1}\right)
+ c_n |\tilde\sigma + \mr\phi^{-2} LW_\phi|^2 _{\tilde{g}}(1 + b e^{-\nu t})^{-N-1} - c_n |\tilde\sigma|^2 _{\tilde{g}} + s.
\end{multline*}

\noindent We now make a few observations about this expression. First, since $\tau_0 > 0$, we see that for some large $T_0$, we have
\begin{equation}\label{h}
h := c_n |\tilde\sigma|_{\tilde{g}} ^2 + (N-2) b_n \tau^2 - s \geq c > 0
\end{equation}

\noindent for all $t \geq T_0$. Note too that the expression $(N-1)b e^{-\nu t} + 1 - (1 + b e^{-\nu t})^{N-1}$ is negative, as may be seen by differentiating the function $r(x) = (N-1)x + 1 - (1 + x)^{N-1}$ and observing that $r(0) = 0$.

Having chosen $T_0$ as above, note that for any $\rho > 0$ satisfying $4 \rho^2 < c$, if $\nu < \rho$ we have $\tilde \Delta e^{-\nu t} \leq 2 \rho^2 e^{-\nu t}$ and hence $(\tilde \Delta - h) e^{-\nu t} \leq - 2 \rho^2 e^{-\nu t}$. From this we conclude that, for $t \geq T_0$, $\Lich_\phi (\mr\phi (1 + b e ^{-\nu t}))$ is bounded above by
\begin{multline}
-2 b\rho^2 e^{-\nu t} + s + b_n \tau^2 b^{N-1} f(b,t) + 2 c_n |\tilde\sigma|_{\tilde{g}} |\mr\phi^{-2} LW_\phi|_{\tilde{g}} b^{-N-1} e^{(N+1)\nu t} \notag \\
+ c_n |\mr\phi^{-2} LW_\phi|^2_{\tilde{g}} b^{-N-1} e^{(N+1)\nu t}
\end{multline}\

\noindent where we have simultaneously used the basic inequalities
\begin{align}
|\tilde\sigma + \mr\phi^{-2} LW |_{\tilde{g}} ^2 &\leq |\tilde\sigma|_{\tilde{g}} ^2 + 2|\tilde\sigma|_{\tilde{g}}|\mr\phi^{-2} LW |_{\tilde{g}} + |\mr\phi^{-2} LW|_{\tilde{g}} ^2 \\
(1 + b e^{-\nu t})^{-N-1} &\leq \min(1,b^{-N-1} e^{(N+1)\nu t})
\end{align}

\noindent and set
\begin{equation*}
f(b,t) = \frac{1}{b^{N-1}} + \frac{N-1}{b^{N-2}} e^{-\nu t} - \left(\frac{1}{b} + e^{-\nu t}\right)^{N-1}.
\end{equation*}

\noindent Noting that $\tilde L W = \mr \phi ^{-(N-2)} LW$, one applies Lemma \ref{dwbd} and the fact that $\phi \leq \mr\phi (1+be^{-\nu t})$ to find that the bound for $\Lich_\phi (\mr\phi (1 + be^{-\nu t})$ we obtained can itself be bounded above by
\begin{multline}
-2 b\rho^2 e^{-\nu t} + s + b_n \tau ^2 b^{N-1} f(b,t)\\
+ k_1 ||\mr\phi (1 + b e^{-\nu t}) ||^{N-\epsilon} _\infty b^{-N-1} e^{((N+1)\nu - \delta) t}  \\
+ k_2 ||\mr\phi (1 + b e^{-\nu t})||^{2N - 2\epsilon} b^{-N-1} e^{((N+1)\nu -2\delta) t}. \label{expr2}
\end{multline}

\noindent where $k_1 = 2 c_n K_\tau \|\tilde \sigma \|_\infty \|\mr \phi \|_\infty ^{N-4}$ and $k_2 = c_n K_\tau ^2 \|\mr\phi\|_\infty ^{2N-8}$ are constants. Finally, when $b \gg 1$ we may factor out the $b$ from the expression $||\mr\phi (1 + b e^{-\nu t})||_\infty$ to find that (\ref{expr2}) is bounded above by
\begin{multline}\label{mustbd}
-2 b\rho^2 e^{-\nu t} + s + b_n \tau ^2 b^{N-1} f(b,t) + k' _1 b^{-1-\epsilon} e^{((N+1)\nu - \delta) t}  \\
+ k' _2 b^{N -1 - 2\epsilon} e^{((N+1)\nu -2\delta) t}
\end{multline}

\noindent where $k' _1 = (2\|\mr\phi\|_\infty) ^{N-\epsilon} k_1$ and $k_2 = (2 \|\mr\phi\|_\infty) ^{2N-2\epsilon} k_2$. We may thus choose some $b_0$ so large that, for all $b \geq b_0$, we have
\begin{equation}
- b_n \tau_0 ^2 b^{N-1} + k' _1 b^{-1 - \epsilon} + k' _2 b^{N-1-2\epsilon} < 0.
\end{equation}

\noindent Now to prove the proposition, we show that there is some choice of $T > 0$ such that (\ref{mustbd}) is negative for any $t \geq T$ and $b \geq b_0$. Clearly there is some $T_1 \geq T_0$ such that $-2 \rho^2 e^{-\nu t} + s < 0$ for all $t \geq T_1$, so our task is reduced to proving that the sum of the final three terms in (\ref{mustbd}) is negative for such a choice of $t$ and $b$. The analysis of these terms differs slightly depending on whether $N \geq 3$ (i.e. $n \leq 6$) or $N < 3$, so we consider these cases separately.

First suppose $N \geq 3$. For any fixed $t$, we see that $f(b,t) \rightarrow -e^{-(N-1)\nu t}$ as $b \rightarrow \infty$. On the other hand, we find by differentiating that $f$ is increasing in $b$, so we thus conclude that $f(b,t) < -e^{(N-1)\nu t}$ for \textit{all} $b$ and $t$. The final three terms in (\ref{mustbd}) are thus bounded above by
\begin{align*}
-b_n \tau_0 ^2 b^{N-1} e^{-(N-1) \nu t} &+ k' _1 b^{-1-\epsilon} e^{((N+1)\nu - \delta) t} + k' _2 b^{N -1 - 2\epsilon} e^{((N+1)\nu -2\delta) t} \\
&\leq e^{-(N-1) \nu t}(-b_n \tau_0 ^2 b^{N-1}  + k' _1 b^{-1-\epsilon}  + k' _2 b^{N -1 - 2\epsilon}) \\
&<0
\end{align*}

\noindent where we have used our smallness condition on $\nu$ in the first inequality and the fact that $b \geq b_0$ in the second. This completes the proposition in the case where $N \geq 3$.

The case in which $N < 3$ is not much more difficult. The key difference is that we now have $\partial f / \partial b < 0$. Note first that for any fixed $b$, one computes the limit (using a Taylor expansion, for example)
\begin{equation}
\lim_{t\rightarrow \infty} \frac{f(b,t)}{e^{-2 \nu t}} = -\frac{b^{3-N} (N-1)(N-2)}{2},
\end{equation}

\noindent so we may thus choose some $b_1 \geq b_0$ such that this limit is less than $-2$ for all $b \geq b_1$. Hence there exists some $T_2 \geq T_1$ such that $f(b_1,t) < -e^{-2\nu t}$ for all $t \geq T_2$. But since $\partial f / \partial b < 0$, we see that $f(b,t) < -e^{-2\nu t}$ for \textit{any} $b \geq b_1$ and hence for any such $b$ the final three terms in (\ref{mustbd}) are bounded above by
\begin{align*}
-b_n \tau_0 ^2 b^{N-1} e^{-2 \nu t} &+ k' _1 b^{-1-\epsilon} e^{((N+1)\nu - \delta) t} + k' _2 b^{N -1 - 2\epsilon} e^{((N+1)\nu -2\delta) t} \\
&\leq e^{-2 \nu t}(-b_n \tau_0 ^2 b^{N-1}  + k' _1 b^{-1-\epsilon}  + k' _2 b^{N -1 - 2\epsilon}) \\
&<0
\end{align*}

\noindent for any $t \geq T_2$. This proves the case and the proposition. \qed


\subsection{Global subsolution}

Based on our construction of a global supersolution, we may suspect that the function $\mr\phi (1 - a e^{-\nu t})$ will provide a global subsolution far out on the end. We will show this to be the case. First observe, again by the conformal covariance of the Lichnerowicz equation, that for any $\phi \geq \mr\phi (1 - a e^{-\nu t})$ we have that $\Lich_\phi (\mr\phi (1 - a e^{-\nu t}))$ is given by
\begin{align*}
&(\tilde\Delta -c_n \tilde R)(1 - a e^{-\nu t}) - b_n \tau^2 (1 - a e^{-\nu t})^2 \\
&\qquad + c_n|\tilde\sigma + \mr\phi^{-2} LW_\phi|_{\tilde{g}} ^2 (1-a e^{-\nu t})^{-N-1} \\
&= (\tilde\Delta + b_n \tau^2 - c_n |\tilde\sigma|_{\tilde{g}} ^2 + s)(1 - a e^{-\nu t}) - b_n \tau^2 (1-a e^{-\nu t})^{N-1} \\
&\qquad + c_n |\tilde\sigma + \mr\phi^{-2} LW_\phi|_{\tilde{g}} ^2 (1-a e^{-\nu t})^{-N-1} \\
&= -a \tilde\Delta e^{-\nu t} + s (1 - a e^{-\nu t}) + (b_n \tau^2 - c_n |\tilde\sigma|_{\tilde{g}} ^2) (1-ae^{-\nu t}) - b_n \tau^2 (1-ae^{-\nu t})^{N-1} \\
&\qquad + c_n |\tilde\sigma + \mr\phi^{-2} LW_\phi|_{\tilde{g}} ^2 (1-a e^{-\nu t})^{-N-1}.
\end{align*}

\noindent As noted in \cite{CM12}, given any positive number $\mu < 1$, there is some $\mu' >0$ such that for all $y \in [\mu,1)$,
\begin{equation}
\frac{y-y^{N-1}}{1-y} \geq \mu'.
\end{equation}

\noindent Hence, fixing $a$ and choosing $t$ so large that $\mu < 1 - a e^{-\nu t} < 1$, we find that
\begin{equation}
b_n \tau^2 (1-ae^{-\nu t}) - b_n \tau^2 (1-ae^{-\nu t})^{N-1} \geq b_n \tau^2 a \mu' e^{-\nu t}.
\end{equation}

\noindent We thus conclude that, for such a choice of $t$, $\Lich_\phi (\mr\phi (1-ae^{-\nu t}))$ is bounded below by
\begin{multline*}
-a \tilde\Delta e^{-\nu t} + s(1-ae^{-\nu t}) + b_n \tau_0 ^2 a \mu' e^{-\nu t} - c_n |\tilde\sigma |^2 _{\tilde{g}} (1 - a e^{-\nu t}) \\
+ c_n |\tilde\sigma + \mr\phi^{-2} LW_\phi|_{\tilde{g}} ^2 (1-ae^{-\nu t})^{-N-1}.
\end{multline*}

\noindent Now using that $(1-ae^{-\nu t})^{-N-1} > 1 > 1 - ae^{-\nu t}$ and that $|\tilde\sigma + \mr\phi^{-2} LW_\phi|_{\tilde{g}} ^2 \geq (|\tilde\sigma|_{\tilde{g}}-\mr\phi^{-2} |LW_\phi|_{\tilde{g}})^2$, we see that the previous expression is bounded below by
\begin{multline*}
 -a \tilde\Delta e^{-\nu t} + s(1-ae^{-\nu t}) + b_n \tau_0 ^2 a \mu' e^{-\nu t} + c_n |\mr\phi^{-2} LW_\phi|_{\tilde{g}}(|\mr\phi^{-2} LW_\phi |_{\tilde{g}} - 2 |\tilde\sigma|_{\tilde{g}}) \\
 + c_n |\tilde\sigma|_{\tilde{g}} ^2 [(1-a e^{-\nu t})^{-N-1} - (1-ae^{-\nu t})] \\
\geq -a \tilde\Delta e^{-\nu t} + s(1-ae^{-\nu t}) + b_n \tau_0 ^2 a \mu' e^{-\nu t} -2c_n |\mr\phi^{-2} LW_\phi|_{\tilde{g}}|\tilde\sigma|_{\tilde{g}}.
\end{multline*}

\noindent At this point we use the fact that $\phi \leq \phi_+ := \mr\phi (1+ae^{-\nu t})$ and Lemma \ref{dwbd} to conclude
\begin{align*}
\Lich_\phi (\mr\phi (1 - ae^{-\nu t})) &\geq -a \tilde\Delta e^{-\nu t} + s(1-ae^{-\nu t}) + b_n \tau_0 ^2 a \mu' e^{-\nu t} - k_1 \|\phi_+\|_\infty  ^{N-\epsilon} e^{-\delta t},
\end{align*}

\noindent where the constant $k_1$ is defined in the previous section. Now there exists a constant $C >0$ such that $\tilde\Delta e^{-\nu t} \leq C \nu^2 e^{-\nu t}$, so the previous expression is bounded below by
\begin{equation} \label{fin}
-a C \nu^2 e^{-\nu t} + s(1-ae^{-\nu t}) + b_n \tau_0 ^2 a \mu' e^{-\nu t} - C' e^{-\delta t}.
\end{equation}

\noindent We thus see that if we choose $\nu \leq \tau_0 \sqrt{b_n \mu'/C}$, the total contribution of the first and third term in positive. Since the second term decays like $\mathcal{O}(e^{-2\nu t})$, we conclude that, with this choice of $\nu$, for some $T' > 0$ the expression \ref{fin} is positive for all $t \geq T'$.

Our next objective is then to find a global subsolution on the compact piece $\mathcal{K} = \{t \leq T'\}$ which is positive yet sufficiently small on the boundary of $\mathcal{K}$. Calling such a function $\eta$, the function $\sup(\eta,\mr\phi (1+ae^{-\nu t}))$ is then continuous and thus a weak global subsolution. To accomplish this, we merely define a slightly larger compact set $\mathcal{K'} = \{t\leq T''\}$, where $T'' > T'$ is so large that $ae^{-\nu t} < 1/2$ for all $t \geq T''$, and solve the Dirichlet problem
\begin{equation}
\begin{cases}
(\Delta - c_n R - b_n \tau^2)\eta = 0 \\
\eta|_{\partial \mathcal{K}'} = \frac{1}{2} \inf(1,\inf_M \mr\phi)
\end{cases}.
\end{equation}

The function $\eta$ is positive and less than 1 on the boundary of $\mathcal{K}'$ and hence on all of $\mathcal{K}'$ by the maximum principle. One easily checks that $\eta$ is a global subsolution on $\mathcal{K}'$, and it is less than $\mr\phi (1-ae^{-\nu t})$ near the boundary of $\mathcal{K}'$ by our choice of $T''$. Therefore, if we extend $\eta$ to be zero identically outside of $\mathcal{K}'$, we conclude that $\sup(\eta,\mr\phi (1+ae^{-\nu t}))$ is a weak global subsolution. We note that this is the only place where we used the condition that $c_n R + b_n \tau^2 >0$.


\subsection{Continuity of the solution maps}

Having constructed global sub/supersolutions of the subcritical system (\ref{subcrit1})-(\ref{subcrit2}), one finds a solution to this system with an application of the Schauder fixed point theorem as in \cite{HNT09}, \cite{Maxwell09}, or \cite{Leach14}. The fixed point theorem we need, a proof of which can be found in \cite{Istratescu81}, is the following:

\begin{thm}\label{Schauder}
Let $X$ be a Banach space, and let $U \subset X$ be a non-empty, convex, closed, bounded subset. If $T: U \rightarrow U$ is a compact operator, then there exists a fixed point $u \in U$ such that $T(u) = u$.
\end{thm}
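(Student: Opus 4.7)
The plan is to reduce this infinite-dimensional fixed point problem to a sequence of finite-dimensional ones via Schauder projections, apply Brouwer's fixed point theorem on each approximation, and then pass to the limit using the compactness of $T$.

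First, since $T: U \to U$ is compact and $U$ is bounded, the closure $\overline{T(U)}$ is compact in $X$. For each integer $k \geq 1$, I would cover $\overline{T(U)}$ by finitely many balls $B(y_i^{(k)}, 1/k)$ with centers $y_i^{(k)} \in T(U) \subset U$, $i = 1, \ldots, N_k$, and form $V_k = \mathrm{conv}\{y_1^{(k)}, \ldots, y_{N_k}^{(k)}\}$. Convexity and closedness of $U$ force $V_k \subset U$, and $V_k$ is a compact convex subset of a finite-dimensional subspace of $X$. Using the partition-of-unity formula
\[
\pi_k(z) = \frac{\sum_{i=1}^{N_k} m_i(z)\, y_i^{(k)}}{\sum_{i=1}^{N_k} m_i(z)}, \qquad m_i(z) = \max\!\left(0,\, \tfrac{1}{k} - \|z - y_i^{(k)}\|\right),
\]
one obtains a continuous Schauder projection $\pi_k: \overline{T(U)} \to V_k$ with the pointwise bound $\|\pi_k(z) - z\| < 1/k$. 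The composition $S_k := \pi_k \circ T : V_k \to V_k$ is then a continuous self-map of a finite-dimensional convex compact set, and Brouwer's fixed point theorem produces some $x_k \in V_k$ with $S_k(x_k) = x_k$.

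To close the loop, observe that $\|T(x_k) - x_k\| = \|T(x_k) - \pi_k(T(x_k))\| < 1/k$. By compactness of $T$, the sequence $\{T(x_k)\}$ admits a subsequence converging to some $u \in \overline{T(U)} \subset U$, and $x_k$ converges to the same limit along that subsequence since $\|x_k - T(x_k)\| \to 0$; continuity of $T$ then gives $T(x_k) \to T(u)$, so $T(u) = u$. I expect the principal technical step to be the careful construction of the Schauder projections so that simultaneously $V_k \subset U$, each $\pi_k$ is continuous, and the uniform bound $\|\pi_k(z) - z\| < 1/k$ holds; these properties follow cleanly once the centers $y_i^{(k)}$ are chosen inside $T(U)$ and $U$ is assumed convex and closed. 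The real analytic content is thus pushed onto Brouwer's fixed point theorem, which is taken as known from finite-dimensional topology.
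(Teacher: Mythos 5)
Your proof is correct and complete: it is the standard argument via Schauder projections onto the convex hull of a finite $1/k$-net of the compact set $\overline{T(U)}$, followed by Brouwer's theorem on each finite-dimensional approximation and a limiting argument using compactness of $T$. The paper itself does not prove this theorem at all — it simply cites the book \cite{Istratescu81} — so there is nothing to compare against except to note that your argument is exactly the textbook proof one finds in such references; the only point worth polishing is that to place the centers $y_i^{(k)}$ in $T(U)$ (rather than merely in its closure) while still covering $\overline{T(U)}$ by $1/k$-balls, one first extracts a finite $1/(2k)$-cover and then perturbs each center into the dense subset $T(U)$.
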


\noindent To apply this theorem in the present context, we look for a solution $(\phi, W)$ where $\phi = \mr\phi + \psi$ and $\psi \in W^{3,p} _{-\nu}$. In this way we think of the subcritical system as having a solution $(\psi, W) \in W^{3,p} _{-\nu} (M) \times (W^{2,p} _{-\delta} (TM) \oplus \mathscr{Y})$, and $\psi$ shall be found as a fixed point of a particular function on the set
\begin{equation}
U = \{ \psi \in L^\infty _{-\nu'} : \phi_- - \mr\phi \leq \psi \leq \phi_+ - \mr\phi \}
\end{equation}

\noindent where we choose some positive $\nu' < \nu$. The set $U$ clearly meets all the criteria of Theorem \ref{Schauder} as a subset of the space $L^\infty _{-\nu'}$.

Let $\mathcal{W}_\epsilon : U \rightarrow W^{2,p} _{-\delta} (TM) \oplus \mathscr{Y}$ be the map which sends $\psi \in U$ to the vector field $G(n^{-1}(n-1) (\mr\phi + \psi)^{N-\epsilon} d\tau)$, where $G$ is the bounded generalized inverse defined above for the operator $\di L$. One can easily see that the map $\mathcal{W}_\epsilon$ is continuous. Now for any traceless 2-tensor $\Sigma \in C^0$ satisfying $|\Sigma|^2 \rightarrow \mr\sigma^2$ on the ends at the rate $e^{-\delta t}$, we define $\mathcal{Q}(\Sigma)$ to be the unique solution of the Lichnerowicz equation (\ref{origLich}) with $\sigma +LW$ replaced by $\Sigma$ which satisfies $\phi_- \leq \mathcal{Q}(\Sigma) \leq \phi_+$. This map is shown to be well-defined in \cite{Leach14}. We thus define a map $\mathcal{S}_{\sigma} : C^0 _{-\delta} (S^2 _0 (M)) \rightarrow W^{2,p} _{-\nu}(M)$ by
\begin{equation}
\pi \mapsto \mathcal{Q}(\sigma + \pi) - \mr\phi.
\end{equation}

\noindent If we could show this map to be continuous, then the map $\mathcal{N}_{\sigma,\epsilon} = \mathcal{S}_\sigma \circ L \circ \mathcal{W}_\epsilon$ would thus be continuous itself. The range of this function lies in $U$ by definition, and the composition of $\mathcal{N}_{\sigma,\epsilon}$ with the compact embedding $W^{2,p} _{-\nu} \hookrightarrow C^0 _{-\nu'}$ gives us a map $T$ which satisfies the hypothesis of Theorem \ref{Schauder}. If $\tilde\psi$ is this fixed point, then $(\mr\phi + \tilde\psi, W_{\mr\phi + \tilde\psi})$ is a solution of the subcritical system by construction. We thus need only show that $\mathcal{S}_\sigma$ is continuous, and to do this we require the following lemma.
\begin{lem}\label{continuity}
The map $\mathcal{S}_{\sigma} : C^0 _{-\delta} (S^2 _0 (M)) \rightarrow W^{2,p} _{-\nu} (M)$ is continuous.
\end{lem}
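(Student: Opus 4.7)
The plan is to reduce continuity of $\mathcal{S}_\sigma$ to a uniform linear elliptic estimate. Set $\phi_i = \mathcal{Q}(\sigma + \pi_i)$, $\phi = \mathcal{Q}(\sigma + \pi)$, and $u_i = \phi_i - \phi$, all pinned pointwise between the global barriers $\phi_-$ and $\phi_+$. Subtracting the two Lichnerowicz equations and applying the mean value theorem to $t\mapsto t^{N-1}$ and $t\mapsto t^{-N-1}$, with intermediate points $\xi_i, \eta_i\in[\phi_-, \phi_+]$, produces the linear equation
\[
(\Delta_g - A_i) u_i = f_i,
\]
where
\[
A_i = c_n R + b_n\tau^2(N-1)\xi_i^{N-2} + c_n(N+1)|\sigma+\pi|^2 \eta_i^{-N-2}
\]
is uniformly bounded in $L^\infty$ by the barriers, and
\[
f_i = -c_n(|\sigma+\pi_i|^2 - |\sigma+\pi|^2)\phi_i^{-N-1}
\]
admits the pointwise bound $|f_i| \leq C|\pi_i - \pi|$, using boundedness of $\sigma$, $\pi$, $\pi_i$ and $\phi_i \geq \phi_- > 0$.

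The hypothesis $\pi_i \to \pi$ in $C^0_{-\delta}$ then gives $|\pi_i - \pi| \leq \epsilon_i e^{-\delta t}$ with $\epsilon_i \to 0$, and so $|f_i| \leq C\epsilon_i e^{-\delta t}$. Since $\nu < \delta/(2N+2) < \delta$ and each cross-section $N_\ell$ is compact, the weighted $L^p$-norm
\[
\|f_i\|_{W^{0,p}_{-\nu}}^p = \int_M |f_i|^p e^{p\nu t}\, dV_g \leq C\epsilon_i^p \int_M e^{p(\nu-\delta)t}\,dV_g
\]
is finite and vanishes in the limit.

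The main step is to promote this source convergence into $\|u_i\|_{W^{2,p}_{-\nu}} \to 0$ via a uniform elliptic estimate for $\Delta_g - A_i$. Using the reduced Lichnerowicz equation \eqref{eq:redlich} to eliminate $c_n\mr R$, one computes the asymptotic limit
\[
\mr A = \frac{\mr\Delta_{\mr{g}_\ell}\mr\phi}{\mr\phi} + b_n\mr\tau_\ell^2(N-2)\mr\phi^{N-2} + c_n(N+2)\mr\sigma^2\mr\phi^{-N-2},
\]
and the cross-sectional operator $-\mr\Delta_{\mr{g}_\ell} + \mr A$ is positive on $L^2(N_\ell)$, being the Jacobi operator at the unique positive solution of the reduced problem. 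Zero therefore lies off the indicial set, so by the Mazzeo-type parametrix construction of \cite{Maz91} used in \cite{CMP12}, the map $\Delta_g - A_i : W^{2,p}_{-\nu} \to W^{0,p}_{-\nu}$ is Fredholm of index zero for all sufficiently small $\nu > 0$. Injectivity on the decaying space follows from the asymptotic positivity combined with a weighted maximum principle on the compact core, exploiting the hypothesis $c_n R + b_n\tau^2 > 0$ and the uniqueness of $\phi$ in $[\phi_-, \phi_+]$. A contradiction/compactness argument on the family $\{A_i\}$ (which lives in a bounded $L^\infty$ ball and converges to the coefficient $A$ associated with $\phi$ along subsequences extracted via the compact embedding of Theorem \ref{embeddings}) then yields uniformly bounded inverses $(\Delta_g - A_i)^{-1}$, and therefore
\[
\|u_i\|_{W^{2,p}_{-\nu}} \leq C\|f_i\|_{W^{0,p}_{-\nu}} \to 0.
\]

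The main obstacle is the uniform invertibility in this final step, and especially the injectivity of $\Delta_g - A_i$ on $W^{2,p}_{-\nu}$: although the cross-sectional positivity of $\mr A$ handles the ends directly via indicial-root analysis, ruling out a zero mode on the compact core requires a careful interplay between the scalar curvature hypothesis, the variational structure of the Lichnerowicz problem, and the fact that $\phi_-$ and $\phi_+$ sandwich all the $\phi_i$.
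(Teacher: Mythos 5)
Your reduction to the difference equation $(\Delta_g - A_i)u_i = f_i$ and the source estimate $\|f_i\|_{L^p_{-\nu}} \le C\epsilon_i$ are correct, and your route is genuinely more explicit than the paper's, which simply defers to an implicit function theorem argument as in \cite{Leach14}. The gap --- which you flag yourself but do not close --- is the uniform invertibility of $\Delta_g - A_i$ on $W^{2,p}_{-\nu}$, and the mechanism you propose has two concrete defects. First, the standard injectivity argument for the linearized Lichnerowicz operator (use the equation to eliminate $c_n R$, then conjugate by the solution so the zeroth-order coefficient becomes $-b_n\tau^2(N-2)\phi^{N-2} - c_n(N+2)|\sigma+\pi|^2\phi^{-N-2} < 0$, then apply the maximum principle) works only for the exact linearization at $\phi$. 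Your $A_i$ contains the mean-value points $\xi_i,\eta_i$, and after the same substitution the relevant coefficient contains $b_n\tau^2\bigl[(N-1)\xi_i^{N-2} - \phi^{N-2}\bigr]$, which has no sign when $\xi_i$ is much smaller than $\phi$ --- something the barriers alone do not exclude. Second, the compactness argument on $\{A_i\}$ is circular: Theorem \ref{embeddings} cannot be applied to $A_i$ as written, since mean-value points carry no regularity a priori (you would need to rewrite $(N-1)\xi_i^{N-2}$ as the difference quotient $(\phi_i^{N-1}-\phi^{N-1})/u_i$ and invoke interior elliptic estimates on $\phi_i$), and even then identifying the subsequential limit of $A_i$ with the linearization at $\phi$ requires knowing $\phi_i \to \phi$ pointwise, which is precisely the statement being proved.

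The repair is to run the contradiction argument on the $u_i$ rather than on the operators: the $\phi_i$ are trapped between the fixed barriers, so interior elliptic estimates give a subsequence with $\phi_i \to \phi_\infty$ in $C^1_{\mathrm{loc}}$; the limit solves the Lichnerowicz equation with data $\sigma+\pi$ and lies in $[\phi_-,\phi_+]$, hence equals $\phi$ by the uniqueness built into the definition of $\mathcal{Q}$. This removes the circularity, after which your indicial analysis on the ends (which is correct: the cross-sectional operator $-\mr\Delta + \mr A$ is positive after conjugating by $\mr\phi$, so $0$ avoids the indicial set) upgrades local convergence to convergence in $W^{2,p}_{-\nu}$. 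Alternatively, follow the paper and apply the implicit function theorem at $\phi$ itself, where the linearization is exactly the operator to which the conjugation trick applies.
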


\noindent The proof of this lemma is an implicit function theorem argument which goes through exactly as the proof of \cite[Lem 4.2]{Leach14} with the obvious modifications.


\section{Convergence of Solutions}

In this section we show that any solution of the subcritical equations has an $L^\infty$ bound depending only on the $L^2$-norm of $LW$. We essentially follow the proof of this for closed manifolds found in \cite{DGH11}, though the geometry of the ends clearly necessitates several modifications to their argument. Let $\epsilon \in [0,1)$ be arbitrary, and let $(\phi,W)$ be a solution to the subcritical equations \eqref{subcrit1}-\eqref{subcrit2}. We define the energy of the solution by
\[
\gamma(\phi,W) = \int_M |LW|^2\, dv,
\] and let $\tilde\gamma = \max\{\gamma,1\}$. Note that $\tilde\gamma$ is finite by Lemma \ref{dwbd}. We rescale $\phi$, $W$ and $\sigma$ as
\[
\tilde\phi = \tilde\gamma^{-\frac{1}{2N}} \phi, \,\,\,\, \tilde W = \tilde\gamma^{-\frac{1}{2}} W, \,\,\,\, \tilde\sigma = \tilde\gamma^{-\frac{1}{2}} \sigma.
\] The deformed equations can then be renormalized as
\begin{equation}\label{newLich}
\frac{1}{\tilde\gamma^{1/n}}(\Delta \tilde\phi - c_n R \tilde\phi) = b_n \tau^2 \tilde\phi^{N-1} - c_n |\tilde\sigma + L\tilde W|^2 \tilde\phi^{-N-1},
\end{equation}\begin{equation}\label{newVect}
\di L\tilde W = \frac{n-1}{n} \tilde\gamma^{-\frac{\epsilon}{2N}}\tilde\phi^{N-\epsilon} d\tau.
\end{equation} Notice that because of our rescaling, we have
\[
\int_M |L\tilde W|^2\, dv \leq 1.
\]

Throughout this section, ``bounded" will mean ``bounded independent of $\epsilon, \phi$ and $W$", and all constants $C$ or $C_i$ will be similarly independent of $\epsilon$, $\phi$ and $W$. We first prove an important lemma.

\begin{lem}\label{bound}
  Suppose that $k\geq 0$. Then, for any solution $\tilde\phi$ of the renormalized subcritical equations \eqref{newLich}-\eqref{newVect}, and any $\delta>0$, we have
\begin{multline}\label{boundEqn}
-C_1\left(\int_{M} e^{-\delta t} \tilde \phi^{2N+Nk} \right)^{\frac{N+2+Nk}{2N+Nk}} + b_n \tau_0^2 \int_M e^{-\delta t}\tilde\phi^{2N+Nk} \\ \leq 2 c_n\int_M e^{-\delta t}|\sigma|^2 \tilde\phi^{Nk} + C_2\int_M |L\tilde W|^2 \tilde\phi^{Nk}.
\end{multline}
\end{lem}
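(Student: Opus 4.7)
The approach is to test the renormalized Lichnerowicz equation \eqref{newLich} against $e^{-\delta t}\tilde\phi^{Nk+N+1}$, integrate by parts, and then use H\"older's inequality to generate the sublinear correction $A^{(N+2+Nk)/(2N+Nk)}$ where $A := \int_M e^{-\delta t}\tilde\phi^{2N+Nk}$. The exponent $Nk+N+1$ is the unique choice that makes the two nonlinear terms $b_n\tau^2\tilde\phi^{N-1}$ and $c_n|\tilde\sigma+L\tilde W|^2\tilde\phi^{-N-1}$ collapse to $b_n\tau^2\tilde\phi^{2N+Nk}$ and $c_n|\tilde\sigma+L\tilde W|^2\tilde\phi^{Nk}$ respectively, which are the two exponents appearing on the two sides of \eqref{boundEqn}.

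First I would justify integration by parts on the Laplacian term: the exponential weight together with the bounds $\phi_-\leq\phi\leq\phi_+$ and $\phi-\mr\phi\in W^{3,p}_{-\nu}$ coming from Theorem \ref{subcritthm} force the boundary contributions at infinity to vanish. IBP produces the good term $-(Nk+N+1)\tilde\gamma^{-1/n}\int e^{-\delta t}\tilde\phi^{Nk+N}|\nabla\tilde\phi|^2$ together with a cross term $\delta\tilde\gamma^{-1/n}\int e^{-\delta t}\tilde\phi^{Nk+N+1}\nabla t\cdot\nabla\tilde\phi$ from differentiating the weight. A Young's-inequality absorption (using boundedness of $|\nabla t|$ on AC ends) eats half of the good gradient term and leaves only a $C\tilde\phi^{Nk+N+2}$ remainder. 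The scalar-curvature term $-c_n R\tilde\phi^{Nk+N+2}/\tilde\gamma^{1/n}$ is treated identically since $R$ is bounded, and the bound $\tilde\gamma^{-1/n}\leq 1$ (because $\tilde\gamma\geq 1$) keeps the emergent constants independent of $\tilde\gamma$. Combined with $\tau^2\geq\tau_0^2$, this yields
\[
b_n\tau_0^2\int_M e^{-\delta t}\tilde\phi^{2N+Nk}\leq c_n\int_M e^{-\delta t}|\tilde\sigma+L\tilde W|^2\tilde\phi^{Nk}+C_0\int_M e^{-\delta t}\tilde\phi^{Nk+N+2}.
\]

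The decisive step is now H\"older's inequality with conjugate exponents $p=(2N+Nk)/(N+2+Nk)$ and $q=(2N+Nk)/(N-2)$, splitting $e^{-\delta t}=(e^{-\delta t})^{1/p}(e^{-\delta t})^{1/q}$, to obtain
\[
\int_M e^{-\delta t}\tilde\phi^{Nk+N+2}\leq A^{(N+2+Nk)/(2N+Nk)}\Bigl(\int_M e^{-\delta t}\,dv\Bigr)^{\!(N-2)/(2N+Nk)}.
\]
This is precisely where the stated exponent arises, and the second factor is finite for any $\delta>0$ thanks to the cylindrical volume growth. To finish, I would bound $|\tilde\sigma+L\tilde W|^2\leq 2|\tilde\sigma|^2+2|L\tilde W|^2$, replace $|\tilde\sigma|^2=\tilde\gamma^{-1}|\sigma|^2$ by $|\sigma|^2$ (since $\tilde\gamma\geq 1$), and drop $e^{-\delta t}\leq 1$ inside the $|L\tilde W|^2$ integral, recovering \eqref{boundEqn} with $C_2=2c_n$.

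The main obstacle is simultaneously choosing the multiplier $\tilde\phi^{Nk+N+1}$ and the H\"older split so that the unavoidable $\tilde\phi^{Nk+N+2}$ remainder—produced by both the weight derivative and the scalar-curvature term—is converted into a strictly sublinear power of $A$. The sublinearity $(N+2+Nk)/(2N+Nk)<1$ is essential: it is what makes the resulting $b_n\tau_0^2 A - C_1 A^\alpha\leq\text{RHS}$ inequality usable as an a priori $L^{2N+Nk}$ bound on $\tilde\phi$ once we control the terms on the right in the subsequent convergence arguments.
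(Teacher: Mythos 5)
Your proposal is correct and follows essentially the same route as the paper: the same multiplier $e^{-\delta t}\tilde\phi^{N+1+Nk}$, the same H\"older split with conjugate exponents $\tfrac{2N+Nk}{N+2+Nk}$ and $\tfrac{2N+Nk}{N-2}$, and the same elementary bounds on the $\tau^2$ and $|\tilde\sigma+L\tilde W|^2$ terms. The only (immaterial) difference is that you absorb the weight-derivative cross term via Young's inequality after one integration by parts, whereas the paper integrates by parts a second time and uses $\Delta e^{-\delta t}\leq Ce^{-\delta t}$; both yield the same $C\int e^{-\delta t}\tilde\phi^{N+2+Nk}$ remainder.
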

\begin{proof}
We multiply equation \eqref{newLich} by $e^{-\delta t} \tilde\phi^{N+1+Nk}$ and integrate over $M$ to get
\begin{multline}\label{integratedLich}
\frac{1}{\tilde\gamma^{1/n}}\int_M \left(-e^{-\delta t}\tilde\phi^{N+1+Nk}\Delta \tilde\phi + c_n e^{-\delta t}R \tilde\phi^{N+2+Nk}\right) dv \\+ b_n \int_M \tau^2 e^{-\delta t} \tilde\phi^{2N+Nk} dv = c_n\int_M e^{-\delta t}|\tilde\sigma + L \tilde W|^2 \tilde\phi^{Nk}dv.
\end{multline}

Consider the first integral on the left. We have
\begin{align*}
  \int_M - e^{-\delta t} \tilde \phi^{N+1+Nk} \Delta \tilde\phi &= \int_M c_1 d (e^{-\delta t}) d(\tilde\phi^{N+2+Nk})  + \int c_2 e^{-\delta t} \tilde\phi^{N+Nk} |d\tilde\phi|^2 \\
  &\geq - C\int_M \Delta(e^{-\delta t}) \tilde\phi^{N+2+Nk} \\
  &\geq - C\int_M e^{-\delta t}\tilde\phi^{N+2+Nk}
\end{align*} where $c_1 = \frac{1}{N+2+Nk}$ and $c_2 = N+1+Nk$. The first and second lines are by integration by parts. These integration by parts are valid because  of the exponential falloff term $e^{-\delta t}$. The third line follows from the inequality $\Delta e^{-\delta t} \leq C e^{-\delta t}$ since $\|t\|_{C^2} <\infty$. Using this, and combining with the $R$ term, the first integral in equation \eqref{integratedLich} is greater than or equal to
\begin{align*}
\int_M (c_n R-C) e^{-\delta t} \tilde\phi^{N+2+Nk}.
\end{align*} Let $v = \frac{2N + Nk}{N+2 + Nk}$, $u = \frac{2N +Nk}{N-2}$. Note that $\frac{1}{u} + \frac{1}{v} = 1$. Let $\mu = 1/u$. We then use H\"older's inequality to see that
\begin{align*}
\int_M (c_n R-C) e^{-\delta t} \tilde\phi^{N+2+Nk} &= \int_M (c_n R-C) e^{-\delta t \mu} \tilde\phi^{N+2+Nk} e^{-\delta t(1-\mu)}\\
& \geq -\||c_n R-C| e^{-\delta t\mu}\|_{L^u} \|\tilde\phi^{N+2+Nk} e^{-\delta t(1-\mu)}\|_{L^v} \\
&\geq - \|c_n R-C\|_{u, \delta/u} \left(\int_M e^{-\delta t} \tilde\phi^{2N+Nk} \right)^{\frac{N+2+Nk}{2N+Nk}}.
\end{align*} Note that $\|c_n R-C\|_{u,\delta/u} <\infty$ since $R\in L^\infty$ and $\delta/u>0$. After using $\tilde\gamma^{-1} \leq 1$, this gives us the first term of the desired inequality.

The second term is easily found by pulling out the infimum of $\tau^2$. The right hand term is found by the inequalities $|\tilde\sigma + L\tilde W|^2 \leq 2|\tilde\sigma|^2 + 2|L\tilde W|^2$, $e^{-\delta t} \leq C$ and $|\tilde\sigma|^2 \leq |\sigma|^2$. This completes the lemma.
\end{proof}

\begin{prop}\label{phiBound}
Suppose $\phi$ is a positive solution of the subcritical equations (\ref{subcrit1})-(\ref{subcrit2}) for tame initial data which satisfies $\phi \to \mr\phi$ on the ends. If $\epsilon\in [0,1)$, we have
\[
\phi < C \tilde\gamma^{\frac{1}{2N}}.
\]
\end{prop}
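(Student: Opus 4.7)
The plan is to adapt the Moser-iteration argument of Dahl--Gicquaud--Humbert \cite{DGH11} to the weighted function spaces on our asymptotically cylindrical manifold, using Lemma \ref{bound} as the engine. Writing $A_k := \int_M e^{-\delta t}\tilde\phi^{2N+Nk}\,dv$, the inequality \eqref{boundEqn} takes the form
\[
-C_1 A_k^{(N+2+Nk)/(2N+Nk)} + b_n\tau_0^2 A_k \leq \mathrm{RHS}_k.
\]
Since $N > 2$ (true for any $n \geq 3$), the exponent $(N+2+Nk)/(2N+Nk)$ is strictly less than $1$, so the linear term $b_n\tau_0^2 A_k$ dominates the $C_1 A_k^{\cdots}$ term once $A_k$ is large; any uniform bound on $\mathrm{RHS}_k$ therefore yields a uniform bound on $A_k$.

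The base case $k=0$ is immediate: $|\sigma|$ is uniformly bounded, $\int e^{-\delta t}\,dv < \infty$ on the asymptotically cylindrical end, and the rescaling normalization gives $\int|L\tilde W|^2\,dv \leq 1$, so $\mathrm{RHS}_0 \leq C$ and $A_0$ is uniformly bounded. For the inductive step, I would bound $\mathrm{RHS}_k$ by a sublinear power of $A_k$ plus quantities controlled by lower $A_{k'}$. For the $|\sigma|^2$ contribution, H\"older's inequality against $e^{-\delta t}\,dv$ gives
\[
\int_M e^{-\delta t}\tilde\phi^{Nk}\,dv \leq \left(\int_M e^{-\delta t}\,dv\right)^{2/(k+2)} A_k^{k/(k+2)},
\]
which is sublinear in $A_k$ (and hence absorbable into the linear term on the left). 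The $\int |L\tilde W|^2\tilde\phi^{Nk}$ term is handled by combining the normalization $\|L\tilde W\|_{L^2} \leq 1$ with elliptic regularity for $\di L$ applied to \eqref{newVect} (upgrading $L^2$ integrability of $L\tilde W$ via Theorems \ref{conformalVectorLaplacian} and \ref{embeddings}), then splitting the product by H\"older's inequality into factors that are either bounded or sublinear powers of $A_k$. Combining these estimates with the domination of the linear term closes the induction and bounds $A_k$ for every $k \geq 0$.

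Having bounded $A_k$ for all $k$, the function $\tilde\phi$ lies in every weighted Lebesgue space; standard elliptic bootstrapping applied to the Lichnerowicz equation \eqref{newLich}, using Theorem \ref{embeddings}, then promotes this to the $L^\infty$ bound $\tilde\phi \leq C$, which is equivalent to the stated claim $\phi < C\tilde\gamma^{1/(2N)}$. The principal obstacle I anticipate is controlling $\int |L\tilde W|^2\tilde\phi^{Nk}$, since the pointwise estimate of Lemma \ref{dwbd} contains the factor $\|\tilde\phi\|_0$ we are trying to bound; breaking this circular dependence requires either careful interpolation isolating a sublinear power of $A_k$ absorbable on the left, or a two-step bootstrap in which an initial weighted $L^q$ bound on $\tilde\phi$ is first used to upgrade the integrability of $L\tilde W$ via the vector constraint. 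The weighted cylindrical geometry will also require care to ensure that boundary contributions at infinity in integrations by parts vanish and that all constants remain independent of $\epsilon$, $\phi$, and $W$ throughout the iteration.
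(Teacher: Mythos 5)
Your overall architecture matches the paper's: Lemma \ref{bound} is indeed the engine, the base case is $k=0$, the circularity in $\int|L\tilde W|^2\tilde\phi^{Nk}$ is real, and the paper resolves it exactly as you anticipate in your last sentence, by a two-step bootstrap in which the $L^1_\delta$ bound on $\tilde\phi^{p_iN}$ is fed into the vector equation \eqref{newVect} to upgrade $L\tilde W$ from $L^{q_i}_\alpha$ to $L^{r_i}_{\delta/r_i}$ with $\frac{1}{r_i}=\frac{1}{q_i}-\frac{1}{n}$, which is then fed back into \eqref{boundEqn} via H\"older with exponents $\frac{r_i}{2}$ and $\frac{p_i}{k_i}$. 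However, you have left out the step that makes this scheme actually work: one must check that the exponents improve, i.e.\ that $p_{i+1}/p_i = 1+\frac{2}{n}-\frac{2}{p}>1$, which is precisely where the standing hypothesis $p>n$ enters; without this the iteration could stall. One also has to rule out the borderline case $q_i=n$ (the paper does so by slightly perturbing $p$ and $\delta$).

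The more serious gap is your final step. The paper does \emph{not} bound $A_k$ for all $k$ and then bootstrap elliptically; it stops the iteration after finitely many steps, as soon as $q_k>n$, at which point Sobolev embedding gives $|L\tilde W|\in L^\infty$, and it then concludes by the \emph{maximum principle}: at an interior maximum of $\tilde\phi$ one gets $\frac{c_n}{\tilde\gamma^{1/n}}R\tilde\phi^{N+2}+b_n\tau^2\tilde\phi^{2N}\leq c_n|\tilde\sigma+L\tilde W|^2$, and since $\tilde\gamma\geq 1$ and $2N>N+2$ this bounds $\tilde\phi$ (with a separate, easy argument for the case that $\tilde\phi$ attains no maximum on the noncompact $M$, where one gets $\phi\leq\sup\mr\phi$). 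Your proposed route has two problems. First, uniform bounds on $A_k$ for every $k$ do not yield $\sup\tilde\phi\leq C$ unless you control the growth of the constants in $k$ well enough that $A_k^{1/(2N+Nk)}$ stays bounded as $k\to\infty$; that constant-tracking is the hard part of a genuine Moser iteration and you have not done it. Second, "standard elliptic bootstrapping" on \eqref{newLich} is obstructed by the factor $\tilde\gamma^{1/n}$: rewriting the equation as $\Delta\tilde\phi = c_nR\tilde\phi+\tilde\gamma^{1/n}\bigl(b_n\tau^2\tilde\phi^{N-1}-c_n|\tilde\sigma+L\tilde W|^2\tilde\phi^{-N-1}\bigr)$, the right-hand side is not bounded in any $L^q$ uniformly in $\tilde\gamma$ (and $\tilde\gamma\to\infty$ is exactly the regime of interest), while the negative-power term is uncontrolled where $\tilde\phi$ is small. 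The maximum principle sidesteps both issues because $\tilde\gamma^{1/n}$ appears only in the denominator of the harmless $R$-term. You should replace your last paragraph's conclusion with the finite-stage iteration plus maximum-principle argument.
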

\begin{proof}

As in \cite{DGH11}, we will prove this proposition in four steps.

\textbf{Step 1.} \emph{$L^1_\delta$ bound on $\tilde\phi^{2N}$}

Suppose $d\tau\in L^p_{-\delta}$. Then using Lemma \ref{bound} with $k = 0$, $\tilde \phi^{2N}$ is clearly bounded in $L^1_\delta$ as long as the right hand side of \eqref{boundEqn} is bounded. However, since $k = 0$ and $\sigma \to \mr{\sigma}$, it is easy to see that both integrals are bounded. Thus $\tilde\phi^{2N}$ is bounded in $L^1_\delta$.

\textbf{Step 2.} \emph{Bounds for $LW$.}

Suppose by induction we have $\tilde\phi^{p_iN}$ bounded in $L^1_\delta$ for some $2\leq p_i$. Let $\frac{1}{q_i} = \frac{1}{p_i} + \frac{1}{p}$. If $q_i>n$, we continue on to step 4. Otherwise define $\frac{1}{r_i} = \frac{1}{q_i} - \frac{1}{n}$.  We will show at the end of step 3 that we can ensure $q_i$ is never $n$.

Let $\alpha = \max\{\delta\left(\frac{1}{p_i} - 1\right), -\delta_*/2\}$. Note that
\[
\tilde \phi^{N-\epsilon} \leq \frac{N-\epsilon}N \tilde\phi^N + \frac\epsilon{N} \leq \tilde \phi^N + \frac1N.
\] Using this and Equation \eqref{newVect}, we get
\begin{align*}
\|\di L\tilde W\|_{L^{q_i}_\alpha} &= C \tilde\gamma^{-\frac{\epsilon}{2N}} \| \tilde \phi^{N-\epsilon} |d\tau|\|_{L^{q_i}_\alpha}\\
& \leq C \left\|\left(\tilde\phi^N+\frac1N\right) |d\tau|\right\|_{L^{q_i}_{\delta\left(\frac{1}{p_i} - 1\right)}} \\
&\leq C \left\|\left(\tilde\phi^N+\frac1N\right)|d\tau| e^{-\frac{\delta}{p_i} t} e^{ \delta t}\right\|_{L^{q_i}} \\
&\leq C \left\|\left(\tilde\phi^N+\frac1N\right) e^{-\frac{\delta}{p_i} t} \right\|_{L^{p_i}} \||d\tau| e^{\delta t}\|_{L^{p}}\\
&\leq C\left(\|\tilde\phi^{Np_i} \|^{1/p_i}_{L^1_\delta} + \|1/N\|_{L^{p_i}_{\delta/p_i}}\right) \|d\tau\|_{L^{p}_{-\delta}}.
\end{align*} The second line holds because $\tilde\gamma\geq 1$. The fourth line is H\"older's inequality with $p_i$ and $p$. The last line follows from the definitions of the norms and the triangle inequality.

This last inequality shows that $\|\di L\tilde W\|_{L^{q_i}_\alpha}$ is bounded since the first norm on the right is bounded by hypothesis and the second is bounded since $d\tau \in L^p_{-\delta}$. We also calculate, since $q_i<n$,
\begin{align*}
  \|L\tilde W\|_{L^{r_i}_{\delta/r_i}} &\leq C\|L\tilde W\|_{W^{1,q_i}_{\alpha}} \\
  &\leq C\|\tilde W\|_{W^{2,q_i}_\alpha \oplus \mathscr{Y}} \\
  &\leq C\|\di L\tilde W\|_{L^{q_i}_\alpha}.
\end{align*} The first line follows from Theorem \ref{embeddings} and the fact that $\delta/r_i > \alpha$. The second line is because $L$ maps $\mathscr{Y}$ to an exponentially decaying piece. The last line is by the existence of generalized inverse of $\di L$ implied by Theorem \ref{conformalVectorLaplacian}.

We have thus shown that $\|L\tilde W\|_{L_{\delta/r_i}^{r_i}}$ is bounded.

\textbf{Step 3.} \emph{Induction on $p_i$}

Define $k_i$ by $\frac{2}{r_i} + \frac{k_i}{p_i} =1$. Lemma \ref{bound} implies that we can show that $\tilde\phi^{2N+Nk_i}$ is bounded in $L^1_\delta$ as long as
\[
2c_n \int_M e^{-\delta t}|\sigma|^2 \tilde\phi^{Nk_i} + C_2\int_M |L\tilde W|^2 \tilde\phi^{Nk_i}
\] is bounded. For both integrals, we use H\"older's inequality with $\frac{r_i}{2}$ and $\frac{p_i}{k_i}$ to bound this above by
\[
C\|\tilde\phi^{p_i N} \|_{L^{1}_\delta}^{k_i/p_i} \left(\|\sigma\|_{L^{r_i}_{\delta/r_i}}^2 + \|L\tilde W\|_{L^{r_i}_{\delta/r_i}}^2\right).
\] The norm on $\tilde \phi$ is bounded by assumption. The norm on $\sigma$ is bounded since $\delta/r_i>0$ and by our conditions on $\sigma$. The norm on $L\tilde W$ is bounded by the previous step.

All of this shows that $\tilde \phi^{2N+Nk_i}$ is bounded in $L^1_\delta$. Let $p_{i+1} = 2+k_i$. With this we have
\begin{multline*}
\frac{p_{i+1}}{p_i} = \frac{2+k_i}{p_i} = \frac{2}{p_i} +1 - \frac{2}{r_i} = \frac{2}{p_i} +1 -2\left(\frac{1}{q_i} - \frac{1}{n}\right) \\= \frac{2}{p_i} + 1 -2\left(\frac{1}{p_i} + \frac{1}{p} - \frac{1}{n}\right)  = 1+ \frac{2}{n}-\frac{2}{p} > 1
\end{multline*} since $p>n$. Hence $p_i\to \infty$, and so $q_i \to p$. We continue steps 2 and 3 a finite number of times until some $k$ such that $q_k >n$. We can avoid the case that $q_i =n$ by slightly decreasing $p$ and $\delta$ at the beginning of the proposition, since $L^p_{-\delta} \subset L^{p-\epsilon}_{-\delta + \epsilon}$ for small $\epsilon>0$.

\textbf{Step 4.} \emph{$L^\infty$ bound on $\tilde \phi$.}

Since $q_k>n$, we have, similar to step 2,
\begin{align*}
\|L\tilde W\|_{L^\infty_{\alpha}} \leq C\|\tilde W\|_{W^{2,q_k}_{\alpha}} &\leq C\| \di L\tilde W\|_{L^{q_k}_{\alpha}}\\ &\leq C\left(\|\tilde\phi^{Np_i} \|^{1/p_i}_{L^1_\delta} + \|1/N\|_{L^{p_i}_{\delta/p_i}}\right) \|d\tau\|_{L^{p}_{-\delta}}
\end{align*} where the right hand side is again bounded. Since $\alpha<0$, this implies that $|L\tilde W|$ is bounded as well.

From the fact that the Laplacian acting on functions only involves first order derivatives of the metric, and since the coefficients of the Lichnerowicz equation \eqref{newLich} are at least in $C^{0,\beta}$ for some $\beta>0$ since $p>n$  it can be easily seen that the function $\tilde \phi$ is in $C^{2,\beta}$. We can thus apply the maximum principle. Let $x\in M$ be where $\tilde \phi$ reaches its maximum value, if it has one. At such a point, we have
\[
\frac{c_n}{\tilde \gamma^{1/n}} R \tilde\phi +  b_n \tau^2 \tilde\phi^{N-1} \leq c_n |\tilde\sigma + L\tilde W|^2 \tilde\phi^{-N-1}
\] which simplifies to
\[
\frac{c_n}{\tilde \gamma^{1/n}} R \tilde\phi^{N+2} + b_n \tau^2 \tilde\phi^{2N} \leq c_n |\tilde\sigma + L\tilde W|^2.
\] Since $R\in L^\infty$ and $\tilde\gamma\geq 1$, $\tilde\phi$ is bounded.

If $\phi>\sup_\Sigma \mr{\phi}$ at some point, it (and thus $\tilde\phi$) has a maximum. Thus, if $\phi$ does not have a maximum, $\phi \leq \sup_\Sigma \mr{\phi}$, which is an even stronger upper bound than the proposition requires.

By recalling that $\tilde\phi = \tilde\gamma^{-\frac{1}{2N}} \phi$, we have proven the proposition.
\end{proof}

Now that we have the bound, let us consider what happens as $\epsilon \to 0$.

\begin{lem}\label{convergenceToSolution}
  Assume that there exist sequences $\epsilon_i$ and $(\phi_i, W_i)$ such that $\epsilon_i \geq 0$, $\epsilon_i \to 0$ and $(\phi_i, W_i)$ is a solution of the deformed equations \eqref{subcrit1}-\eqref{subcrit2} with $\epsilon = \epsilon_i$. Also assume that $\gamma(\phi_i, W_i)$ is bounded. Then there exists a constant $\nu>0$ and a sequence of the $(\phi_i, W_i)$ which converges in the $W^{2,p}_{-delta}\oplus\mathscr{Y}$ norm to a solution $(\phi_\infty,W_\infty)$ of the original conformal constraint equations.
\end{lem}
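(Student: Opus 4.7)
The plan is a two-sided bootstrap followed by a compactness extraction, modeled on the closed-manifold argument of \cite{DGH11} but carried out in weighted spaces. The bounded-energy hypothesis gives $\tilde\gamma_i$ uniformly bounded, so Proposition \ref{phiBound} produces a uniform upper bound on $\phi_i$; the subsolution $\phi_-$ constructed in Section \ref{barriers} supplies a uniform positive lower bound, so $0 < \phi_- \leq \phi_i \leq C$ independent of $i$. The right-hand side $\frac{n-1}{n}\phi_i^{N-\epsilon_i} d\tau$ of \eqref{subcrit2} is therefore uniformly bounded in $L^p_{-\delta}$, and Theorem \ref{conformalVectorLaplacian} gives a uniform bound on $W_i$ in $W^{2,p}_{-\delta} \oplus \mathscr{Y}$. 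Feeding the resulting $W^{1,p}_{-\delta}$ control on $LW_i$ back into the Lichnerowicz equation \eqref{subcrit1}, and using $\phi_i \geq \phi_- > 0$ to control the singular term $\phi_i^{-N-1}$, standard elliptic regularity for $\Delta_g - c_n R_g$ on weighted Sobolev spaces yields a uniform bound on $\phi_i - \mr\phi$ in $W^{3,p}_{-\nu}$ for an appropriately small $\nu \in (0,\delta)$.

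With these uniform bounds, the compactness part of Theorem \ref{embeddings} provides compact embeddings $W^{3,p}_{-\nu} \hookrightarrow C^2_{-\nu'}$ and $W^{2,p}_{-\delta} \hookrightarrow C^1_{-\delta'}$ for $\nu' > \nu$ and $\delta' > \delta$. After passing to a subsequence, $\phi_i - \mr\phi \to \phi_\infty - \mr\phi$ in $C^2_{-\nu'}$ and $W_i \to W_\infty$ in $C^1_{-\delta'} \oplus \mathscr{Y}$, the finite-dimensional $\mathscr{Y}$ component causing no difficulty. The two-sided bounds on $\phi_i$ together with $\epsilon_i \to 0$ show that $\phi_i^{N-\epsilon_i} \to \phi_\infty^N$ with the required decay, so termwise passage to the limit in \eqref{subcrit1}-\eqref{subcrit2} shows that $(\phi_\infty, W_\infty)$ solves the original LCBY system \eqref{origLich}-\eqref{origVect}. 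To upgrade this to convergence in the claimed $W^{2,p}_{-\delta}\oplus \mathscr{Y}$ (and $W^{3,p}_{-\nu}$) norms, apply the same bootstrap to the differences $W_i - W_\infty$ and $\phi_i - \phi_\infty$: the right-hand sides of the corresponding linearized equations tend to zero in $L^p_{-\delta}$ by weighted dominated convergence, and the linear Fredholm inverses of $\di L$ and $\Delta_g - c_n R_g$ convert this into the desired strong convergence.

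The main obstacle is choosing the weight $\nu$ consistently so that every step closes: it must be small enough for $\di L$ and $\Delta_g - c_n R_g$ to have the Fredholm properties needed (as in Theorem \ref{conformalVectorLaplacian} and the analogous scalar result used in \cite{CM12}), small enough that cross-weights appearing in the nonlinear Lichnerowicz source do not overflow $\delta$, yet strictly positive in order to exploit the compact embeddings. Once $\nu$ is fixed, the remaining work is a linear bootstrap combined with weighted dominated convergence; the uniform two-sided bound $\phi_- \leq \phi_i \leq C$ is what makes both the singular term $\phi^{-N-1}$ and the $\epsilon$-dependent nonlinearity $\phi^{N-\epsilon_i}$ behave uniformly in $i$.
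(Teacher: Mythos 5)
Your proposal is correct in outline and shares the paper's skeleton: bounded energy plus Proposition \ref{phiBound} gives a uniform $L^\infty$ bound on $\phi_i$, the vector equation and Theorem \ref{conformalVectorLaplacian} give a uniform $W^{2,p}_{-\delta}\oplus\mathscr{Y}$ bound on $W_i$, and compact weighted embeddings let you extract a convergent subsequence and pass to the limit. Where you genuinely diverge is in how convergence of the $\phi_i$ is obtained: the paper does not run an elliptic bootstrap on the Lichnerowicz equation at all, but instead invokes Lemma \ref{continuity} (continuity of the solution map $\mathcal{S}_\sigma$, proved by an implicit function theorem argument), so that $C^0_{-\delta'}$ convergence of $LW_i$ immediately yields $W^{2,p}_{-\nu}$ convergence of $\psi_i=\phi_i-\mr\phi$. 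Your route — uniform $W^{3,p}_{-\nu}$ bounds via weighted elliptic regularity, compactness, termwise limiting, then a linear bootstrap on the differences to upgrade to strong norms — is more self-contained but requires two inputs you should make explicit: a uniform positive lower bound $\phi_i\geq\phi_->0$ and the uniform exponential rate $|\phi_i-\mr\phi|\lesssim e^{-\nu t}$ on the ends, both of which come from the barrier sandwiching of Section \ref{barriers} and hence presuppose that each $\phi_i$ is the solution trapped between $\phi_-$ and $\phi_+$ (the paper's appeal to Lemma \ref{continuity} rests on the same identification $\phi_i=\mathcal{Q}(\sigma+LW_i)$, so this is a shared, not an extra, hypothesis). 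Two small corrections: with the sign conventions of Theorem \ref{embeddings}, the compact embeddings go into the \emph{weaker} weight, i.e.\ $W^{2,p}_{-\delta}\hookrightarrow C^0_{-\delta'}$ for $\delta'<\delta$ (not $\delta'>\delta$), which is exactly why the paper must take $\delta'$ close to $\delta$ with $\nu<\delta'/(2N)$; and the final upgrade of $W_i\to W_\infty$ in $W^{2,p}_{-\delta}\oplus\mathscr{Y}$ is obtained in the paper simply by noting that the right-hand side of the vector equation converges in $L^p_{-\delta}$ and applying the bounded generalized inverse $G$, which is the same mechanism as your ``difference'' argument.
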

\begin{proof}
From the previous proposition, we know that the $\phi_i$ are uniformly bounded in the $L^\infty$ norm. By the now standard inequality \begin{equation}\label{easyWBound}\|W_i\|_{W^{2,p}_{-\delta}\oplus\mathscr{Y}} \leq C\left(\|\phi_i \|_0^N + \|1/N\|_0\right) \|d\tau\|_{L^{p}_{-\delta}},\end{equation} the sequence $W_i$ is uniformly bounded in $W^{2,p}_{-\delta} \oplus\mathscr{Y}$. We have that $L: W^{2,p}_{-\delta}\oplus\mathscr{Y} \to C^0_{-\delta'}$ is compact by Theorem \ref{embeddings} and the fact that $\mathscr{Y}$ is finite dimensional. Thus, up to selecting a subsequence, we can assume that the sequence $LW_i$ converges in $C^0_{-\delta'}$ to some $LW_\infty$.

Thus by Lemma \ref{continuity}, the functions $\psi_i := \phi_i - \mr\phi$ converge in $W^{2,p}_{-\nu}$ (and thus in $L^\infty_{-\nu}$) for some $\nu>0$ to a function $\psi_\infty$. We must assume we picked $\delta'$ close enough to $\delta$ in the previous paragraph such that we have $\nu < \frac{\delta'}{2N}$. Since $\phi_i \rightarrow \mr\phi$ on the ends, this implies that $\phi_\infty := \psi_\infty + \mr\phi$ approaches $\mr\phi$ on the ends exponentially fast. Since the right hand side of the vector equation \eqref{origVect} converges in $L^{p}_{-\delta}$, we have that the sequence $W_i$ converges in the $W^{2,p}_{-\delta}\oplus\mathscr{Y}$ norm as well. The regularity of $\phi_\infty$ and $W_\infty$ guarantee that they are solutions of the conformal constraint equations (with $\epsilon =0$).
\end{proof}

\begin{lem}\label{convergenceToLimitEquation}
  Assume there exists sequences $\epsilon_i$ and $(\phi_i, W_i)$ such that $\epsilon_i\geq 0$, $\epsilon_i\to 0$ and $(\phi_i, W_i)$ is a solution of the subcritical equations \eqref{subcrit1}-\eqref{subcrit2} with $\epsilon = \epsilon_i$. Also assume that $\gamma(\phi_i, W_i) \to \infty$. Then there exists a non-zero solution $W \in W^{2,p}_{-\delta}\oplus \mathscr{Y}$ of the limit equation
  \[
  \di LW = \alpha_0 \sqrt{\frac{n-1}{n}} |LW| \frac{d\tau}{\tau}
  \] for some $\alpha_0 \in (0,1]$ such that $|LW| \leq C e^{-\delta t}$ for some $C$ independent of $\phi_i$, $W_i$ and $W$.
\end{lem}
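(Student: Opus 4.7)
The approach parallels that of Dahl--Gicquaud--Humbert \cite{DGH11}, adapted to the asymptotically cylindrical setting via weighted function spaces and the pointwise decay estimate supplied by Lemma \ref{dwbd}. Set $\tilde\gamma_i=\max\{\gamma_i,1\}$ and work with the rescaled quantities $\tilde\phi_i=\tilde\gamma_i^{-1/(2N)}\phi_i$, $\tilde W_i=\tilde\gamma_i^{-1/2}W_i$, $\tilde\sigma_i=\tilde\gamma_i^{-1/2}\sigma$ solving \eqref{newLich}--\eqref{newVect}. Proposition \ref{phiBound} gives $\|\tilde\phi_i\|_\infty\leq C$, and Lemma \ref{dwbd} applied to the rescaled momentum constraint together with $\|\phi_i\|_0\leq C\tilde\gamma_i^{1/(2N)}$ yields the decisive uniform pointwise bound
\[
|L\tilde W_i| \;=\; \tilde\gamma_i^{-1/2}|LW_{\phi_i}| \;\leq\; K\|d\tau\|_{0,-\delta'}\,\tilde\gamma_i^{-\epsilon_i/(2N)}\|\tilde\phi_i\|_0^{N-\epsilon_i}e^{-\delta t} \;\leq\; Ce^{-\delta t}.
\]
Theorem \ref{conformalVectorLaplacian} then produces a uniform bound on $\tilde W_i$ in $W^{2,p}_{-\delta}\oplus\mathscr{Y}$.

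By the compact embedding $W^{2,p}_{-\delta}\oplus\mathscr{Y}\hookrightarrow C^0_{-\delta'}$ (Theorem \ref{embeddings}) and finite-dimensionality of $\mathscr{Y}$, pass to a subsequence along which $\tilde W_i\rightharpoonup W_\infty$ in $W^{2,p}_{-\delta}\oplus\mathscr{Y}$ and $L\tilde W_i\to LW_\infty$ in $C^0_{-\delta'}$ for some $0<\delta'<\delta$. Passing to a further subsequence, $\tilde\gamma_i^{-\epsilon_i/(2N)}\to\alpha_0\in[0,1]$ and $\tilde\phi_i^{2N}\rightharpoonup U$ weak-$\ast$ in $L^\infty$. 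Multiplying \eqref{newLich} by $\tilde\phi_i^{N+1}$ rewrites the Lichnerowicz equation as
\[
b_n\tau^2\tilde\phi_i^{2N} - c_n|\tilde\sigma_i+L\tilde W_i|^2 \;=\; \tilde\gamma_i^{-1/n}\tilde\phi_i^{N+1}\left(\Delta\tilde\phi_i-c_nR\tilde\phi_i\right).
\]
Testing against $\eta\in C_c^\infty(M)$, the identity $\tilde\phi_i^{N+1}\Delta\tilde\phi_i = \tfrac{1}{N+2}\Delta(\tilde\phi_i^{N+2})-(N+1)\tilde\phi_i^N|\nabla\tilde\phi_i|^2$ combined with integration by parts transfers the $\Delta$ onto $\eta$ in the bounded $\tilde\phi_i^{N+2}$ piece, which vanishes through the prefactor $\tilde\gamma_i^{-1/n}\to 0$; a parallel energy estimate obtained by testing \eqref{newLich} against $\tilde\phi_i\eta$ controls the remaining gradient-squared term, which also vanishes in the limit. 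Since $\tilde\sigma_i\to 0$ uniformly and $L\tilde W_i\to LW_\infty$, this yields distributionally $b_n\tau^2 U = c_n|LW_\infty|^2$, hence $U^{1/2}=\sqrt{n/(n-1)}\,|LW_\infty|/|\tau|$, using $c_n/b_n=n/(n-1)$. Upgrading this weak identification to strong $L^p_{-\delta}$ convergence of $\tilde\phi_i^{N-\epsilon_i}$ to $U^{1/2}$ (via interior elliptic regularity for \eqref{newLich}) then permits passing to the limit in the right-hand side of \eqref{newVect} to obtain
\[
\di LW_\infty \;=\; \alpha_0\sqrt{\tfrac{n-1}{n}}\,|LW_\infty|\,\tfrac{d\tau}{\tau},
\]
after absorbing the globally defined sign of $\tau$ into $W_\infty$ if necessary (well-defined since $\tau^2\geq\tau_0^2>0$).

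For nontriviality, the uniform bound $|L\tilde W_i|\leq Ce^{-\delta t}$ together with $\int e^{-2\delta t}\,dV_g<\infty$ on asymptotically cylindrical ends provides a dominating function, so the dominated convergence theorem upgrades the pointwise convergence $L\tilde W_i\to LW_\infty$ to $L^2$-convergence. Since $\|L\tilde W_i\|_{L^2}^2=\tilde\gamma_i^{-1}\gamma_i=1$ for all large $i$, we obtain $\|LW_\infty\|_{L^2}=1$, hence $W_\infty\neq 0$. To exclude $\alpha_0=0$: if it held, $\di LW_\infty=0$, and integration by parts on the truncated domain $\{t\leq T\}$ gives
\[
\int_{\{t\leq T\}}|LW_\infty|^2 \;=\; -2\int_{\{t\leq T\}}\langle\di LW_\infty,W_\infty\rangle + O(e^{-\delta T}),
\]
using the exponential decay of $LW_\infty$ and boundedness of $W_\infty\in W^{2,p}_{-\delta}\oplus\mathscr{Y}\hookrightarrow C^0$; letting $T\to\infty$ forces $\|LW_\infty\|_{L^2}=0$, a contradiction. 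Therefore $\alpha_0\in(0,1]$.

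The principal technical obstacle is the passage to the limit in the Lichnerowicz equation: showing that $\tilde\gamma_i^{-1/n}\int\tilde\phi_i^N|\nabla\tilde\phi_i|^2\eta\to 0$, and strengthening the distributional identity $b_n\tau^2U=c_n|LW_\infty|^2$ to the strong $L^p_{-\delta}$ convergence of $\tilde\phi_i^{N-\epsilon_i}$ required to substitute into the limit of the vector equation. Translating this step of the \cite{DGH11} argument to the weighted $W^{2,p}_{-\delta}\oplus\mathscr{Y}$ setting, without access to compactness of $M$ and with $\tilde\phi_i$ potentially degenerating where $LW_\infty=0$, is the delicate part of the proof.
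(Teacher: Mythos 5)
Your opening moves -- the rescaling, the uniform $W^{2,p}_{-\delta}\oplus\mathscr{Y}$ bound on $\tilde W_i$ via the pointwise estimate from Lemma \ref{dwbd}, the $C^0_{-\delta'}$ extraction of $LW_\infty$, the dominated-convergence argument that $\|LW_\infty\|_{L^2}=1$, and the integration-by-parts exclusion of $\alpha_0=0$ -- all match the paper closely and are sound. The divergence, and the gap, is in the identification of the limit profile of $\tilde\phi_i$.

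You propose to take a weak-$\ast$ limit $\tilde\phi_i^{2N}\rightharpoonup U$, derive the distributional identity $b_n\tau^2 U=c_n|LW_\infty|^2$, and then ``upgrade'' to strong $L^p_{-\delta}$ convergence of $\tilde\phi_i^{N-\epsilon_i}$ to $U^{1/2}$ via interior elliptic regularity. This upgrade is not a technicality -- it is precisely where the argument could fail, and the mechanism you suggest is the wrong tool. Weak-$\ast$ convergence of $\tilde\phi_i^{2N}$ is perfectly consistent with $\tilde\phi_i$ oscillating between two values and not converging pointwise at all, in which case $\tilde\phi_i^{N-\epsilon_i}$ converges weakly to something strictly less than $U^{1/2}$; nothing in the distributional identity rules this out. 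Moreover interior elliptic regularity for \eqref{newLich} cannot rescue you: the Laplacian carries the prefactor $\tilde\gamma_i^{-1/n}\to 0$, so the equation degenerates to an algebraic relation in the limit, and any elliptic estimate on $\tilde\phi_i$ would have constants blowing up with $\tilde\gamma_i^{1/n}$. You flag this as ``the delicate part,'' but it is the core of the lemma, and as stated the proposal does not close it. There is also a secondary issue: testing the Lichnerowicz identity against $\eta\geq 0$ after transferring the Laplacian only gives $\int\eta\bigl(b_n\tau^2 U-c_n|LW_\infty|^2\bigr)\leq 0$ (the discarded term $-(N+1)\tilde\gamma_i^{-1/n}\int\eta\,\tilde\phi_i^N|\nabla\tilde\phi_i|^2$ has a sign), and the ``parallel energy estimate'' you invoke to kill it would at best show this integral is bounded, not that it tends to zero.

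The paper avoids the entire weak-to-strong difficulty by \emph{defining} $\tilde\phi_\infty$ pointwise through $\tilde\phi_\infty^N=\sqrt{n/(n-1)}\,\tau^{-1}|L\tilde W_\infty|$ and then proving $\tilde\phi_i\to\tilde\phi_\infty$ in $L^\infty$ with a barrier argument: choose smooth $\tilde\phi_\pm$ with $\tilde\phi_\infty+\tfrac\epsilon2\leq\tilde\phi_+\leq\tilde\phi_\infty+\epsilon$ and $\tilde\phi_\infty-\epsilon\leq\tilde\phi_-\leq\tilde\phi_\infty-\tfrac\epsilon2$; the uniform convergence $L\tilde W_i\to L\tilde W_\infty$ together with $\tau^2\geq\tau_0^2>0$ and the decaying prefactor $\tilde\gamma_i^{-1/n}$ makes $\tilde\phi_+$ a supersolution of \eqref{newLich} for $i$ large (and $\tilde\phi_-$ a subsolution on its positivity set), and the maximum principle pins $\tilde\phi_i$ between them outside a compact set where $\tilde\phi_i\to\tilde\gamma_i^{-1/2N}\mr\phi$ is already small. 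This delivers exactly the strong convergence that your weak-$\ast$ route cannot, and then the passage to the limit in \eqref{newVect} is immediate. I would advise replacing the weak-$\ast$/distributional step with this barrier argument.
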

\begin{proof}
Arguing as in the previous lemma, we have that $\tilde W_i$ are uniformly bounded in $W^{2,p}_{-\delta}\oplus \mathscr{Y}$. Without loss of generality, we can assume that $\gamma>1$, and so
\[
\int_M |L \tilde W_i|^2 = 1.
\] Up to selecting a subsequence, we can then assume that $L\tilde W_i$ converges in $C^{0}_{-\delta'}$ for some $\alpha>0$ to some $L\tilde W_\infty$.

We can show the falloff of $L\tilde W_\infty$ by considering
\begin{align*}
\|LW_i\|_{L^\infty_{-\delta}} &\leq C\|\phi_i^{N-\epsilon_i} |d\tau|\|_{L^p_{-\delta}}\\
&\leq C\left\|(\phi_i^N+1/N) |d\tau|\right\|_{L^{p}_{-\delta}} \\
&\leq C \tilde\gamma^{1/2}_i \|d\tau\|_{L^p_{-\delta}}
\end{align*} as before, but using Proposition \ref{phiBound} and $\tilde\gamma\geq 1$. Thus
\[
\|L\tilde W_i\|_{L^\infty_{-\delta}}\leq C
\] for some $C$ independent of $\epsilon$, $W_i$ and $\phi_i$. Since the convergence of $L\tilde W_i$ is in $L^\infty_{-\delta'}$, we have
\[
|L\tilde W_\infty| \leq C e^{-\delta' t}
\] for some $C$ independent of $\phi_i$, $W_i$ and $W$. After the rest of this proof, we can repeat this argument with better convergence to get the desired fall off.

Let $\tilde \phi_\infty$ be defined by
\[
\tilde \phi_\infty^N = \sqrt{\frac{n}{n-1}} \tau^{-1} |L\tilde W_\infty|
\] i.e., $\tilde\phi_\infty$ satisfies
\[
b_n \tau^2 \tilde\phi_\infty^{N-1} = c_n|L\tilde W_\infty|^2 \tilde\phi_\infty^{-N-1}.
\] If we can show that $\tilde\phi_i \to \tilde \phi_\infty$ in $L^\infty$, then the continuity of the vector equation implies we have that $\tilde W_\infty$ is a $W^{2,p}_{-\delta}\oplus \mathscr{Y}$ solution to the limit equation with $\alpha_0 = \lim \gamma(\phi_i, W_i)^{-\frac{\epsilon_i}{2N}}$. We have $\alpha_0\in [0,1]$ since $\gamma(\phi_i, W_i) \to \infty$. Note that \[\int_M |L\tilde W_\infty|^2 = 1\] since $L\tilde W_i$ converges in $C^0_{-\delta'}$, and so $\tilde W_\infty \not\equiv 0$, and so the solution is nontrivial. Since we assumed there are no global conformal Killing fields in $L^2$, we cannot have the case $\alpha_0 = 0$.

To show this convergence, we will show that for any $\epsilon >0$, that $|\tilde\phi_\infty - \tilde\phi_i|<\epsilon$ for large enough $i$. Take a $C^2$ function with bounded derivatives $\tilde\phi_+$ such that
\[
\tilde\phi_\infty + \frac{\epsilon}{2} \leq \tilde\phi_+ \leq \tilde\phi_\infty + \epsilon.
\] We show that $\tilde\phi_+$ is a supersolution of the rescaled Lichnerowicz equation \eqref{newLich} if $i$ is large enough. Multiplying the rescaled Lichnerowicz equation \eqref{newLich} by $\tilde\phi_+^{N+1}$, we have to show that
\[
\frac{\tilde\phi_+^{N+1}}{\tilde\gamma^{1/n}} \left(-\Delta \tilde\phi_+ + c_n R \tilde\phi_+\right) + b_n \tau^2 \tilde\phi_+^{2N} \geq c_n |\tilde \sigma + L\tilde W_i|^2.
\] Since
\[
\tilde\phi_+^{2N} \geq \left(\tilde\phi_\infty + \frac{\epsilon}{2}\right)^{2N} \geq \tilde\phi_\infty^{2N} + \left(\frac{\epsilon}{2}\right)^{2N},
\] the previous inequality will be satisfied provided that
\[
\frac{\tilde\phi_+^{N+1}}{\tilde\gamma^{1/n}} \left(- \Delta \tilde\phi_+ + c_nR \tilde\phi_+\right) + b_n \tau^2\left(\frac{\epsilon}{2}\right)^{2N} \geq c_n |\tilde \sigma + L\tilde W_i|^2 - c_n |L\tilde W_\infty|^2.
\] Note that everything goes to zero as $i\to \infty$ except for the $\epsilon$ term. Since $\tau^2 \geq \tau_0^2$, there exists an $i_0$ such that for all $i\geq i_0$, $\tilde\phi_+$ is a supersolution.

Note that $\tilde\phi_+ \geq \frac{\epsilon}{2}$. Also, $\tilde\phi_i \to \tilde{\mr\phi} = \tilde\gamma_i^{-1/2N}\mr\phi$ on the ends for every $i$. Thus, for $i$ large enough, $\tilde\phi_i <\frac{\epsilon}{2}$ outside some compact set $K_i$. Inside $K_i$, since $\tilde\phi_\infty$ is a supersolution and $\tilde\phi_i$ is regular enough, we can apply the maximum principle to show that $\tilde\phi_+$ remains larger than $\tilde \phi_i$. Thus $\tilde\phi_i \leq \tilde\phi_+ \leq \tilde\phi_\infty +\epsilon$ for large enough $i$.

We proceed similarly with a $\tilde\phi_- \in C^2$ with
\[
\tilde\phi_\infty-\epsilon \leq \tilde\phi_- \leq \tilde\phi_\infty -\frac{\epsilon}{2}.
\] Since $LW_\infty\to 0$ on the ends, $\tilde\phi_-$ is negative on the ends. On the set where it is positive, however, we can show that it is also a subsolution to the rescaled Lichnerowicz equation \eqref{newLich}. By the same argument as before, $\tilde\phi_i \geq \tilde\phi_- \geq \tilde\phi_\infty -\epsilon$. This completes the theorem.

\end{proof}

We can now prove our main result.

\begin{thm}
  Let $(g,\sigma,\tau)$ be conformal data on an asymptotically cylindrical manifold satisfying the conditions of Theorem \ref{subcritthm}. Then at least one of the following is true:
  \begin{itemize}
    \item The system \eqref{origLich}-\eqref{origVect} admits a solution $(\phi,W)$ with regularity as in Theorem \ref{subcritthm}. Also, the set of these solutions is compact.

    \item There exists a non-zero solution $W \in W^{2,p}_{-\delta}\oplus \mathscr{Y}$ of the limit equation
  \[
  \di LW = \alpha_0 \sqrt{\frac{n-1}{n}} |LW| \frac{d\tau}{\tau}
  \] for some $\alpha_0 \in (0,1]$ such that $|LW| \leq C e^{-\delta t}$ for some $C$ independent of $\phi_i$, $W_i$ and $W$.
  \end{itemize}
\end{thm}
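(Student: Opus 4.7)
The plan is to combine Theorem \ref{subcritthm} with the two convergence lemmas (Lemma \ref{convergenceToSolution} and Lemma \ref{convergenceToLimitEquation}) to produce the dichotomy, and then handle the compactness of the solution set separately via the same machinery.

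First I would pick any sequence $\epsilon_i \in (0,1)$ with $\epsilon_i \to 0$. By Theorem \ref{subcritthm} applied to the tame data $(g,\sigma,\tau)$, for each $i$ there is a solution $(\phi_i, W_i)$ to the subcritical system \eqref{subcrit1}--\eqref{subcrit2} with $\epsilon = \epsilon_i$, lying in the regularity class $\phi_i - \mr\phi \in W^{3,p}_{-\nu}$ and $W_i \in W^{2,p}_{-\delta} \oplus \mathscr{Y}$. Setting $\gamma_i := \gamma(\phi_i,W_i) = \int_M |LW_i|^2\, dv$, the sequence $\{\gamma_i\}$ is a well-defined sequence of non-negative reals (finite by Lemma \ref{dwbd}).

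The dichotomy now comes from passing to a subsequence. Either $\{\gamma_i\}$ has a bounded subsequence, or $\gamma_i \to \infty$. In the first case, Lemma \ref{convergenceToSolution} (applied to that subsequence) produces, up to further extraction, a limit $(\phi_\infty, W_\infty)$ that converges in $W^{2,p}_{-\nu} \oplus (W^{2,p}_{-\delta}\oplus \mathscr{Y})$ and solves the original LCBY system \eqref{origLich}--\eqref{origVect} in the required regularity class. In the second case, Lemma \ref{convergenceToLimitEquation} produces a non-zero $W \in W^{2,p}_{-\delta}\oplus \mathscr{Y}$ solving the limit equation for some $\alpha_0 \in (0,1]$, with the claimed pointwise decay $|LW| \leq C e^{-\delta t}$ for a constant $C$ independent of the approximating sequence. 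This establishes the alternative.

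It remains to prove the compactness of the solution set under the first alternative. I would argue by the same bootstrap: let $\{(\phi_j,W_j)\}$ be any sequence of solutions to the untruncated LCBY system with $\phi_j - \mr\phi \in W^{3,p}_{-\nu}$ and $W_j \in W^{2,p}_{-\delta}\oplus \mathscr{Y}$, and set $\gamma_j := \int_M |LW_j|^2\, dv$. If $\gamma_j$ were unbounded along some subsequence, then applying Lemma \ref{convergenceToLimitEquation} to the constant sequence $\epsilon_j \equiv 0$ would yield a non-trivial solution of the limit equation; so if we are already in the LCBY-solvable alternative with no limit equation solution the energies $\gamma_j$ must remain bounded. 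Lemma \ref{convergenceToSolution}, again with $\epsilon_j \equiv 0$, then gives a convergent subsequence in $W^{2,p}_{-\nu}\oplus (W^{2,p}_{-\delta}\oplus \mathscr{Y})$, which is the required compactness. (If on the other hand the limit equation also has a solution, the theorem makes no further claim, consistent with the authors' remark that the dichotomy need not be strict.)

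The main obstacle I expect is entirely bundled into the two preceding lemmas: Lemma \ref{convergenceToLimitEquation} in particular requires the delicate rescaling $\tilde\phi = \tilde\gamma^{-1/(2N)}\phi$ together with the barrier/maximum-principle argument to force $\tilde\phi_i \to \tilde\phi_\infty$ in $L^\infty$ on the ends, and without that convergence one cannot pass the rescaled Lichnerowicz equation to the limit to recover the limit equation. At the level of the final theorem, however, once those lemmas are in hand, the argument is essentially a case split on the behavior of $\gamma_i$, plus one extra application of the same lemmas to obtain compactness.
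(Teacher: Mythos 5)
Your proposal is correct and follows essentially the same route as the paper: obtain subcritical solutions from Theorem \ref{subcritthm}, split on whether the energies $\gamma_i$ stay bounded, and invoke Lemma \ref{convergenceToSolution} or Lemma \ref{convergenceToLimitEquation} accordingly, with compactness handled by applying the same two lemmas to an arbitrary sequence of exact solutions with $\epsilon_i\equiv 0$. The only cosmetic difference is that the paper phrases the argument contrapositively (assuming no limit-equation solution exists and deriving the first alternative), whereas you phrase it as a direct case split; the content is identical.
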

\begin{proof}
  Assume that the limit equation admits no such solution for any $\alpha_0\in (0,1]$. From Theorem \ref{subcritthm}, we know there exists a sequence of solutions $(\phi_i, W_i)$ with appropriate regularity of the deformed constraints \eqref{subcrit1}-\eqref{subcrit2} with $\epsilon_i = 1/i$. If the sequence $\gamma(\phi_i, W_i)$ was unbounded, there would be a non-zero solution to the limit equation by Lemma \ref{convergenceToLimitEquation}, a contradiction. Thus the sequence is bounded, and so by Lemma \ref{convergenceToSolution} there exists a solution $(\phi_\infty, W_\infty)$ with appropriate regularity of the conformal constraint equations \eqref{origLich}-\eqref{origVect}.

  For compactness, let $(\phi_i, W_i)$ be an arbitrary sequence of solutions to the conformal constraint equations. Using Lemma \ref{convergenceToLimitEquation} with $\epsilon_i = 0$, we have that $\gamma(\phi_i, W_i)$ is bounded. Lemma \ref{convergenceToSolution} then says that a subsequence of $(\phi_i, W_i)$ converges. This completes the proof.
\end{proof}


\section{Existence Results}

Theorem \ref{mainTheorem} says that if we can show that the limit equation \eqref{limiteq} has no solutions with particular properties, then there is a solution to the full constraint equations (\ref{origLich})-(\ref{origVect}). In this section, we will use this result to show that for certain tame near-CMC seed data, there is no solution to the limit equation. Hence Theorem \ref{mainTheorem} guarantees a solution to the constraint equations.

\begin{cor}
Let $(g,\sigma,\tau)$ be conformal data on an asymptotically cylindrical manifold satisfying the conditions of Theorem \ref{subcritthm}, and suppose $\Ric \leq (c_1\chi^2 - c_2 e^{-2\mu t}) g$ for some constants $c_i, \mu >0$ and smooth compactly supported bump function $\chi$. Then there is some $C > 0$ such that if
\[
\left\|\frac{d\tau}{\tau} \right\|_{C^0_{-\mu}} < C,
\] there is a solution to the LCBY equations (\ref{origLich})-(\ref{origVect}) as in Theorem \ref{mainTheorem}.
\end{cor}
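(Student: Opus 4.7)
The plan is to apply Theorem \ref{mainTheorem}: it suffices to produce a threshold $C>0$ such that whenever $C_\tau:=\|d\tau/\tau\|_{C^0_{-\mu}}<C$, the limit equation admits no non-zero $W\in W^{2,p}_{-\delta}\oplus\mathscr{Y}$ with $|LW|\leq C_0 e^{-\delta t}$. Suppose for contradiction such a $W$ exists. Pair the limit equation against $W$ in $L^2$ and integrate by parts; the pointwise bound $|LW|\lesssim e^{-\delta t}$ together with the boundedness of $W$ (with a cutoff approximation to handle the $\mathscr{Y}$-component) kills the boundary terms, and the algebraic identity $(LW)_{ij}\nabla^i W^j=\tfrac12|LW|^2$ yields
\[
\tfrac{1}{2}\|LW\|_{L^2}^{2} \;=\; -\alpha_0\sqrt{\tfrac{n-1}{n}}\int_M|LW|\,\langle d\tau/\tau, W\rangle\,dv.
\]
Cauchy-Schwarz and the weighted bound $|d\tau/\tau|\leq C_\tau e^{-\mu t}$ then give the energy estimate
\[
\|LW\|_{L^2}^{2}\;\leq\;\frac{4\alpha_0^{2}(n-1)}{n}\,C_\tau^{2}\int_M|W|^2 e^{-2\mu t}\,dv.
\]

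Next, combine this with the integrated Weitzenböck identity
\[
\tfrac12\|LW\|_{L^2}^{2}\;=\;\|\nabla W\|_{L^2}^{2}+\tfrac{n-2}{n}\|\di W\|_{L^2}^{2}-\int_M\Ric(W,W)\,dv,
\]
justified again by cutoff approximation. Inserting the hypothesis $\Ric\leq(c_1\chi^2-c_2 e^{-2\mu t})g$ and discarding the nonnegative divergence term yields
\[
\|\nabla W\|_{L^2}^{2}+c_2\int_M|W|^2 e^{-2\mu t}\,dv\;\leq\;\tfrac{1}{2}\|LW\|_{L^2}^{2}+c_1\int_M\chi^2|W|^2\,dv.
\]
Substituting the energy estimate and restricting $C_\tau$ so that $2\alpha_0^{2}(n-1)C_\tau^{2}/n\leq c_2/2$ (which is possible since $\alpha_0\leq 1$) collapses this to the central inequality
\[
\|\nabla W\|_{L^2}^{2}+\tfrac{c_2}{2}\int_M|W|^2 e^{-2\mu t}\,dv\;\leq\;c_1\int_M\chi^2|W|^2\,dv.
\]

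The main obstacle is turning this inequality into the conclusion $W\equiv 0$. The proposed tool is a weighted Hardy-Poincaré inequality appropriate to the asymptotically cylindrical geometry: introduce a slightly enlarged smooth cutoff $\eta$ equal to $1$ on $\supp\chi$ and compactly supported in a set $K'$ on which $e^{-2\mu t}$ is bounded below, then apply a local Poincaré-type estimate on $K'$ to obtain
\[
\int_M\chi^2|W|^2\,dv\;\leq\;A\,\|\nabla W\|_{L^2}^{2}+B\int_M|W|^2 e^{-2\mu t}\,dv
\]
with $A,B$ depending only on $\chi,\mu$ and the geometry of $g$. Feeding this back produces $(1-c_1 A)\|\nabla W\|_{L^2}^{2}+(c_2/2-c_1 B)\int|W|^2 e^{-2\mu t}\,dv\leq 0$, which forces $W\equiv 0$ whenever $c_1 A<1$ and $c_1 B<c_2/2$, contradicting non-triviality and supplying the threshold $C$ in terms of $c_1,c_2,\chi,\mu$ and the geometry. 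The bulk of the technical work concentrates precisely here: verifying the Hardy-Poincaré estimate with useable constants $A,B$ in the presence of a potentially non-decaying $\mathscr{Y}$-component of $W$, and checking that the bounded asymptotic conformal Killing contribution is absorbed by the weighted $L^2$ term on the right.
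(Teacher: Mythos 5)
Your computation through the ``central inequality'' $\|\nabla W\|_{L^2}^{2}+\tfrac{c_2}{2}\int|W|^2 e^{-2\mu t}\leq c_1\int\chi^2|W|^2$ parallels the paper's reasoning: integration by parts on the limit equation, the Weitzenb\"ock identity, the Ricci bound, and a smallness restriction on $\|d\tau/\tau\|_{C^0_{-\mu}}$. The gap is in how you close the argument from there. The weighted Poincar\'e inequality you propose, $\int\chi^2|W|^2\leq A\|\nabla W\|_{L^2}^2 + B\int|W|^2 e^{-2\mu t}$, does hold --- in fact trivially with $A=0$ and $B=(\sup\chi^2)\sup_{\supp\chi}e^{2\mu t}$, since $\supp\chi$ is compact --- but its constants $A,B$ are fixed once $\chi,\mu,g$ are fixed, and your contradiction requires $c_1 A<1$ and $c_1 B<c_2/2$. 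Those are extra constraints on the Ricci-bound parameters $c_1,c_2$ that are \emph{not} part of the corollary's hypotheses: for $c_1$ large or $c_2$ small relative to $B$, a coefficient in your final inequality has the wrong sign and no contradiction results. There is also a structural reason the approach cannot work as stated: a Poincar\'e inequality on a compact set does not see the hypothesis that $(M,g)$ has no $L^2$ conformal Killing fields, yet that hypothesis must enter somewhere. Indeed, a nontrivial conformal Killing field $V$ has $LV=0$, and the Weitzenb\"ock identity together with the Ricci bound then yields $\|\nabla V\|_{L^2}^2+c_2\int|V|^2 e^{-2\mu t}\leq c_1\int\chi^2|V|^2$; so any Poincar\'e constants strong enough to force $W\equiv 0$ from the central inequality would also rule out conformal Killing fields, which a purely local estimate cannot do.

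The paper closes the argument without any constraint on $c_1,c_2$ by inserting the conformal-Killing-field hypothesis through a compactness argument. From the Bochner estimate one first gets $\tfrac12\int|LV|^2 + c_1\int\chi^2|V|^2 \geq C\|V\|^2_{W^{1,2}(\supp\chi)}$ for compactly supported $V$. The key step is then a \emph{coercivity} estimate $\int|LV|^2\geq C_0\int\chi^2|V|^2$, proved by contradiction: a normalized sequence with $\int|LV_i|^2\leq\tfrac1i\int\chi^2|V_i|^2$ is $W^{1,2}(\supp\chi)$-bounded, a Rellich--Kondrachov subsequence converges strongly in $L^2(\supp\chi)$ to a nonzero limit $V$, weak lower-semicontinuity forces $\int|LV|^2=0$, and so $V$ is a nontrivial conformal Killing field --- contradicting the standing assumption. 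Feeding $C_0$ back into the Bochner estimate gives $\int|LW|^2\geq C'\int|W|^2 e^{-2\mu t}$, and comparing with the limit-equation energy bound $\int|LW|^2\leq C''\,\|d\tau/\tau\|_{C^0_{-\mu}}^2\int|W|^2 e^{-2\mu t}$ yields the threshold $C=\sqrt{C'/C''}$. The extension from compactly supported $V$ to $W\in W^{2,p}_{-\delta}\oplus\mathscr{Y}$ is then handled by a cutoff approximation using the exponential decay of $LW$ and the boundedness of $W$, as you anticipated. You should replace the Hardy--Poincar\'e step with this compactness argument.
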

\begin{proof}
Assume $W \in W^{2,p}_{-\delta}\oplus \mathscr{Y}$ is a solution of the limit equation. We claim that for any such $W$, \[
\int |LW|^2 \geq C \int |W|^2 e^{-2\mu t}
\] for some positive $\mu \leq \delta, C >0$. If so, we can then take the limit equation, multiply by $W$ and integrate by parts to get
\begin{align*}
\int |LW|^2 &\leq -\int \sqrt{\frac{n-1}{n}} |LW| W \frac{d\tau}{\tau}\\
&\leq \left\|\frac{d\tau}{\tau} \right\|_{C^0_{-\mu}} \|LW\|_{L^2} \left( \int |W|^2 e^{-2\mu t} \right)^{1/2}.
\end{align*} We then immediately see that there is a constant $C$ (the one used in the hypotheses) such that
\[
\left\|\frac{d\tau}{\tau} \right\|_{C^0_{-\mu}} \geq C.
\] This contradicts our hypotheses, and so there is a solution to the constraint equations.

We prove the claimed inequality first for compactly supported vector fields $V\in W^{2,p}$. First, recall the pointwise Bochner type formula
\[
\frac12 \di L V = \Delta V + \left(1-\frac2n\right) \nabla(\di V) + \Ric(V,\cdot),
\] which is shown, for example, in \cite[App B]{GS12}. Multiplying both sides by $V$ and integrating by parts, we get
\begin{align*}
\frac12 \int |LV|^2 = \int |\nabla V|^2 + \left(1-\frac2n\right) (\di V)^2 - \Ric(V,V).
\end{align*} Dropping the $\di V$ term and using the Ricci curvature bound, we get
\begin{align} \label{eq:endInequality}
\frac12 \int |LV|^2 + c_1 \int \chi^2 |V|^2 &\geq \int |\nabla V|^2 + c_2 \int |V|^2 e^{-2\mu t} \\
&\geq C \|V\|_{W^{1,2}(\supp\chi)}^2 \notag
\end{align} where the $C$ depends on $\chi$ and $\mu$.

Next, we want to show that $\int |LV|^2 \geq C \int \chi^2 |V|^2$. Assume this were not true. Then there exists a sequence $V_i$ such that
\[
\int |LV_i|^2 \leq \frac1i \int \chi^2 |V_i|^2.
\] We normalize the $V_i$ such that $\int \chi^2 |V_i|^2 = 1$. Because of inequality \eqref{eq:endInequality}, we have that $\|V_i\|_{W^{1,2}(\supp\chi)}$ is bounded. Thus $V_i$ converge strongly to some $V$ in $L^2(\supp\chi)$ by the Rellich-Kondrachov theorem. In particular, $\int \chi^2 |V|^2 =1$ and so it is nonzero. Also, since $\int |\nabla V|^2$ is bounded, we get weak convergence of $|LV|$, and so $\int |LV|^2 = 0$. This implies $V$ is a nontrivial global $L^2$ conformal Killing field, contradicting our assumptions. Thus $\int |LV|^2 \geq C \int \chi^2 |V|^2$.

This immediately gives that\[
\int |LV|^2 \geq C \int |V|^2 e^{-2\mu t}
\] for compactly supported $V$. We claim the same inequality holds for $W\in W^{2,p}_\delta\oplus \mathscr{Y}$. Indeed, there are smooth cutoff functions $\eta_i$ such that for $W_i = \eta_i W$,
\[
\left| \int |LW|^2 - \int |LW_i|^2 \right| < \frac1i \,\,\,\, \textrm{ and } \,\,\,\, \left|\int |W|^2 e^{-2\mu t} - \int |W_i|^2 e^{-2\mu t} \right| < \frac1i.
\] This is because $LW$ decays exponentially fast outside a compact set and because $|W|$ is bounded. Thus we have
\begin{align*}
C\int |W|^2 e^{-2 \mu t} &\leq C\int |W_i|^2 e^{-2\mu t} + \frac{C}i \\
&\leq (1+\epsilon) \int |LW_i|^2 - C\epsilon \int |W_i|^2 e^{-2\mu t} + \frac{C}i\\
&\leq (1+\epsilon) \int |LW|^2 - C\epsilon \int |W_i|^2 e^{-2\mu t} + \frac{C +1+\epsilon}{i}
\end{align*} for some small fixed $\epsilon>0$. Thus, for large enough $i$, the last two terms add together to be negative, and so we have
\[
C\int |W|^2 e^{-2 \mu t} \leq (1+\epsilon) \int |LW|^2
\] for any small enough $\epsilon>0$. Thus the desired inequality holds. This completes the proof.
\end{proof}


\section{Extension of main results to conformally asymptotically cylindrical metrics}

While the results in this paper have been proven only for asymptotically cylindrical manifolds, analogous results hold for conformally asymptotically cylindrical manifolds (see Subsection \ref{notation}) as well with only a few changes in the proof. First, as explained in \cite{Leach14}, the $L^p$-Sobolev version of Theorem \ref{conformalVectorLaplacian} holds even for conformally AC metrics, and this observation is the basis for the proof of Lemma \ref{dwbd}. Now let $w$ be any conformal factor (as in Subsection 1.1) such that $\check{g} = w^{N-2} g$, where $g$ is an asymptotically cylindrical manifold. It follows from the covariance of the Lichnerowicz equation that $\phi$ is a global sub/supersolution of the conformal LCBY equations \eqref{origLich}-\eqref{origVect} for $\check{g}$ if and only if $w\phi$ is a global sub/supersolution of the equation
\begin{equation}\label{conformalLich}
\Delta_g \theta - c_n R_g \theta = b_n \tau^2 \theta^{N-1} - c_n |w^2 (\sigma + L_{\check{g}} W)|^2 _g\theta^{-N-1}
\end{equation} coupled with the vector equation \eqref{origVect} for $\check{g}$. Notice that only the $L$ operator is defined with respect to $\check{g}$, while the rest are with respect to $g$.

Because of this fact, we can find global sub/supersolutions $\phi_-, \phi_+$ to Equation \eqref{conformalLich} as in Section \ref{barriers}, since that metric is asymptotically cylindrical. Then, $\phi_-/w, \phi_+/w$ are global sub and supersolutions to the original conformal constraint equations for $\check{g}$. Note that these will produce a solution $\phi$ to the Lichnerowicz equation that asymptotes to $\mr\phi/\mr w_\ell$, which is still a valid asymptote since $\mr w_\ell$ is also a function on $N_\ell$. The rest of the proof proceeds the same.


\section{Acknowledgments}
The first author was partially supported by the NSF grant DMS-1263431. This material is based upon work supported by the National Science Foundation under Grant No. 0932078 000, while the first author was in residence at the Mathematical Sciences Research Institute in Berkeley, California, during the fall of 2013. The second author was partially supported by the NSF grant DMS-1105050.

The authors would like to thank Jim Isenberg and Rafe Mazzeo for suggesting this project.

\bibliographystyle{alpha}
\bibliography{KnownResults}
\end{document}